\newtheorem{theorem}{Theorem}
\newtheorem{lemma}{Lemma}
\newtheorem{example}{Example}
\newtheorem{definition}{Definition}
\newtheorem*{test*}{Schedulability Test in~\cite{li2013outstanding}}
\newtheorem*{test1*}{Schedulability Test in~\cite{chen2014capacity}}
\newcommand{\cp}{Lazy-Cpath }
\begin{document}


\title{New Analysis Techniques for Supporting Hard Real-Time Sporadic DAG Task Systems on Multiprocessors}

\author{Zheng Dong and Cong Liu\\
Department of Computer Science, University of Texas at Dallas}
\maketitle

\thispagestyle{plain}
\pagestyle{plain}

\noindent\textbf{Abstract.} The scheduling and schedulability analysis of real-time directed acyclic graph (DAG) task systems have received much recent attention. The DAG model can accurately represent intra-task parallelism and precedence constraints existing in many application domains. Existing techniques show that analyzing the DAG model is fundamentally more challenging compared to the ordinary sporadic task model, due to the complex intra-DAG precedence constraints which may cause rather pessimistic schedulability loss. However, such increased loss is counter-intuitive because the DAG structure shall better exploit the parallelism provided by the multiprocessor platform. Our observation is that the intra-DAG precedence constraints, if not carefully considered by the scheduling algorithm, may cause very unpredictable execution behaviors of subtasks in a DAG and further cause pessimistic analysis.  In this paper, we present a set of novel scheduling and analysis techniques for better supporting hard real-time sporadic DAG tasks on multiprocessors, through smartly defining and analyzing the execution order of subtasks in each DAG.
Evaluation demonstrates that our developed utilization-based schedulability test is highly efficient, which dramatically improves schedulability of existing utilization-based tests by over 60\% on average. Interestingly, when each DAG in the system is an ordinary sporadic task, our test becomes identical to the classical density test designed for the sporadic task model.




\section{Introduction}
In many real-time and embedded systems, applications are defined using processing graphs~\cite{kumar1994introduction} to better exploit the parallel computing capability provided by the multicore hardware. The scheduling and schedulability analysis of the real-time DAG (directed acyclic graph) model, which is defined to accurately capture such graph structure and intra-graph precedence constraints, have received much recent attention~\cite{saifullah2014parallel, ferry2013real, fonseca2017improved, liu2010supporting, jiang2017semi, saifullah2013multi, jiang2016decomposition, qamhieh2013global, nelissen2012techniques, kim2013parallel, andersson2012analyzing, bonifaci2013feasibility, li2013outstanding, baruah2014improved, chen2014capacity, li2014analysis, baruah2015federated, baruah2015federated2, baruah2015federated3, yang2016reducing, li2017mixed, guo2017energy, li2014federated}.  Although the development of such works represents a major and promising step towards better supporting real-time DAG tasks on multiprocessors, they fundamentally suffer from intra-DAG precedence constraints and  may exhibit pessimistic schedulability loss. By comparing such DAG-based tests with the tests developed for ordinary sporadic tasks, it is not hard to observe that the schedulability loss becomes much more significant (also directly noted in~\cite{li2013outstanding, chen2014capacity}).

However, such increased schedulability loss is counter-intuitive. If a DAG task system contains the same amount of workload as an ordinary sporadic task system, then the DAG task system shall be easier to be schedulable because a DAG task may better exploit the parallelism provided by multiprocessors. Fig.~\ref{fig:introexample} illustrates a simple example illustrating this intuition. As seen in Fig.~\ref{fig:introexample}(a), $\tau_{3,1}$ misses its deadline if $\tau_3$ is a sporadic task with an execution cost of three time units. Interestingly, if $\tau_3$ is a DAG task (Fig.~\ref{fig:introexample}(c)) with the same total execution cost, then it becomes schedulable as seen in Fig.~\ref{fig:introexample}(b). This is because as a DAG task, $\tau_3$ can benefit from the parallelism provided by the  two-processor system.

This intuition motivated us to understand why the existing DAG-based schedulability analysis techniques~\cite{li2013outstanding, chen2014capacity} indicate the opposite conclusion where DAG task systems are harder to be schedulable and encounter more schedulability loss compared to ordinary sporadic tasks. We observe that a common consensus is that for ordinary sporadic tasks, a released job can start immediately whenever a processor becomes idle; yet for DAG tasks, a released jobs from a subtask in a DAG can start only if all its predecessors have completed, besides the premise of processor availability. This extra condition causes over pessimism in the analysis as the execution behaviors of subtasks in a DAG is unpredictable and complex due to the precedence constraints among subtasks defined by the DAG structure. Existing scheduling algorithms are not smart enough to handle such complex DAG structures (also noted in \cite{li2013outstanding}).

\begin{figure}[t]
{\includegraphics[width=0.48\textwidth]{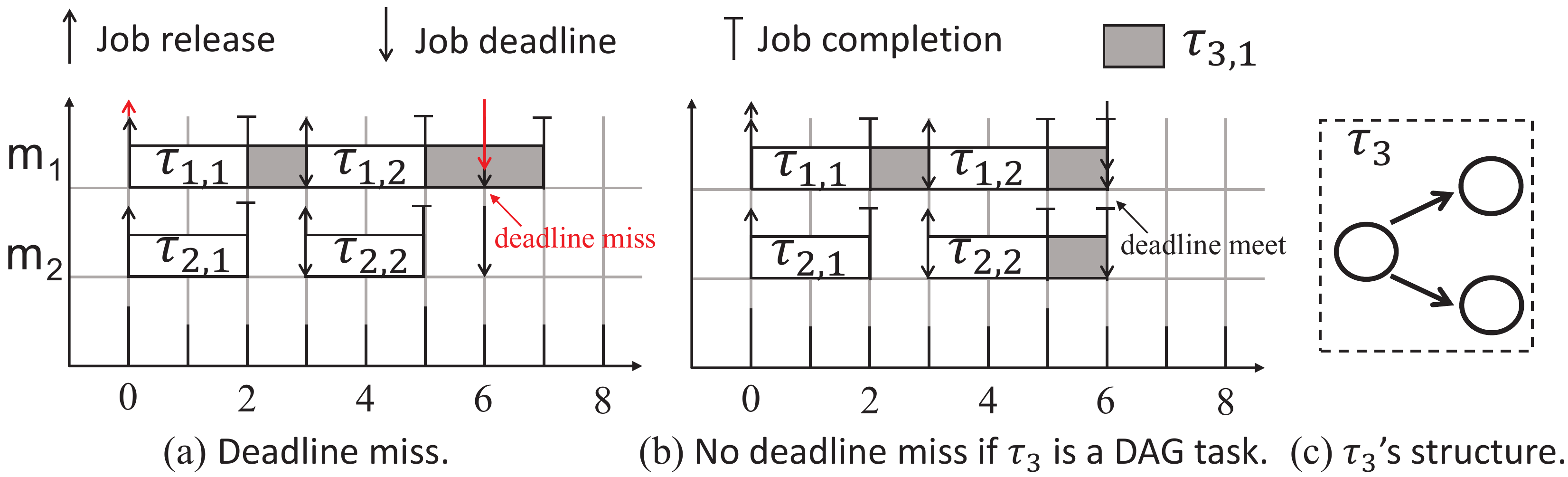}
}\caption{\footnotesize{Example illustrating that the DAG model may better exploit multiprocessor parallelism  compared to the sporadic task model. $\tau_1$ and $\tau_2$ are  sporadic tasks each with an execution cost of two time units.}
}\label{fig:introexample}
\end{figure}

In this paper, we propose a set of novel scheduling methods and analysis techniques that enable the system to take the intuitive advantage of the parallelism benefits for executing DAG tasks on multiprocessors. There are two major components in our proposed approach: a \cp policy and a new executing/non-executing interval-based analysis technique. The intuitive idea behind the \cp policy is to smartly define the execution order of ready jobs of subtasks belonging to the same DAG. By applying the \cp policy, intuitively, the resulting GEDF schedule would allow us to analyze the DAG tasks while mostly ignoring the complex intra-DAG precedence constraints. Combining the \cp policy with a novel executing/non-executing interval-based analysis technique, our developed analysis can more precisely characterize the workload distribution in the analysis window of interest, which yields rather efficient schedulability test.

\noindent\textbf{Overview of related work.} Scheduling and schedulability analysis of real-time DAG task systems have received much recent attention~\cite{jiang2017semi, saifullah2013multi, jiang2016decomposition, qamhieh2013global, nelissen2012techniques, kim2013parallel, andersson2012analyzing, bonifaci2013feasibility, li2013outstanding, baruah2014improved, chen2014capacity, li2014analysis, baruah2015federated, baruah2015federated2, baruah2015federated3}. Capacity augmentation bounds have been derived under  decomposition-based scheduling~\cite{qamhieh2013global, nelissen2012techniques, kim2013parallel, saifullah2013multi, jiang2016decomposition} where DAGs are transformed into sequential subtasks and scheduled under traditional schedulers, non-decomposition-based scheduling~\cite{andersson2012analyzing, bonifaci2013feasibility, li2013outstanding, chen2014capacity, baruah2014improved} where subtasks in DAGs are scheduled according to their corresponding precedence constraints, and federated scheduling~\cite{li2014analysis, baruah2015federated, baruah2015federated2, baruah2015federated3} where high-utilization DAGs are exclusively executed on dedicated processors and low-utilization DAGs are treated as sequential sporadic tasks and share the remaining processors. In \cite{li2013outstanding, chen2014capacity}, two utilization-based schedulability tests have also been proposed under GEDF scheduling. As directly noted in \cite{li2013outstanding}, these schedulability tests may be pessimistic due to the sub-optimality of directly applying GEDF to schedule the DAG task system.

\vspace{1mm}
\noindent\textbf{Our contributions.} In this paper, we develop a utilization-based schedulability test for hard real-time (HRT)  sporadic DAG task systems scheduled on a multiprocessor with $M$ identical cores under GEDF. Specifically, we show that any HRT DAG task system $\tau$ is schedulable under GEDF with the \cp policy, if for every task $\tau_k \in \tau$,
\begin{equation}\label{eq:finaltest2}
\sum_{i=1}^n\eta_i \leq M - (M - 1)\times\sigma_k,
\end{equation}
\noindent holds, where $\eta_i$ is defined in Eq.~\ref{eq:eta} and $\sigma_k$ denotes the total utilization of subtasks on the critical path (defined in Def.~\ref{definition:cputilization}) of DAG $\tau_k$. This schedulability test can be view as the DAG version of the schedulability test derived for the ordinary sporadic task model given in~\cite{baker2003},  through replacing the constraint of $\sigma_k$ in Eq.~\ref{eq:finaltest2} by $\tau_k$'s utilization.
To derive this test, we invent a novel \cp policy that advises GEDF to execute the ready job of any subtask on the corresponding DAG's critical path as the last job among all ready jobs at any time instant. By further inventing and applying a new executing/non-executing interval-based anylsis technique to the resulting GEDF schedule with the \cp policy enabled, we are able to more precisely characterize the workload distribution in an analysis window.

Evaluation demonstrates that the above schedulability test is highly efficient, which dramatically improves schedulability compared to existing utilization-based tests by over 60\% on average. Interestingly, this test can be generalized to the classical density test\footnote{Under the density test, an implicit-deadline sporadic task system is schedulable under GEDF if $U_{sum} \leq M - (M-1) \cdot u_{max}$ holds, where $u_{max}$ denotes the maximum task utilization in the system.}~\cite{goossens2003priority} designed for ordinary sporadic task systems. When each DAG task only contains a single subtask, the schedulability condition Eq.~\ref{eq:finaltest2} becomes identical to the density test. 


\section{System Model}
\label{sec:model}

We consider the problem of scheduling a set $\tau = \{\tau_1, \dots, \tau_n\}$ of $n$ independent sporadic DAG tasks on $M$ identical processors. Each task $\tau_i$ is specified by a tuple $(G_i,d_i, p_i)$, where $G_i = (V_i,E_i)$ is a DAG whose structure is defined by a set of subtasks $V_i$ and a set of edges $E_i$ connecting these subtasks, and $d_i$ and $p_i$ denote the relative deadline and period of DAG $\tau_i$, respectively. Let $\tau_i^k$ denote the $k^{th}$ \emph{subtask} of $\tau_i$ and $\tau_{i,j}^k$ denote the $j^{th}$ job of $\tau_i^k$. A directed edge from subtask $\tau_i^k$ to $\tau_i^h$ implies that $\tau_i^k$ is a predecessor subtask of $\tau_i^h$ (also $\tau_i^h$ is a successor subtask of $\tau_i^k$). Thus, for any $j$, job $\tau_{i,j}^k$ is a predecessor job of $\tau_{i,j}^h$.  Any job of any subtask is \emph{ready} to start executing if all of its predecessor jobs have completed.
Each sporadic DAG task $\tau_i$ generates an infinite sequence of \textit{dag-jobs}, with arrival times of successive jobs separated by at least $p_i$ time units. Clearly, each dag-job released by $\tau_i$ is composed by $|V_i|$ jobs belonging to the $|V_i|$ subtasks. Thus, the $j^{th}$ jobs of  all subtasks in $\tau_i$ share a same release time denoted by $r_{i,j}$ and a same absolute deadline denoted by $d_{i,j} = r_{i,j} + d_i$.
 Each subtask has a worst case execution time $e_i^k$. Our DAG model allows multiple source subtasks (i.e., those with no predecessor subtask) and sink subtask (i.e., with no successor subtask) are allowed in our general DAG model.
In this paper, we focus our attention on implicit-deadline DAG task systems, where $d_i = p_i$ holds for all DAG tasks.

\begin{definition}\label{def:critical}
\textbf{(Critical Path of a DAG)} A chain in DAG $\tau_i$ is a sequence of subtasks $\tau_i^{x_1}, \tau_i^{x_2}, \dots, \tau_i^{x_z} \in V_i$ such that $(\tau_i^{x_u}, \tau_i^{x_{u+1}})$ is an edge in $G_i$, $1 \leq u < z-1$. The length of this chain is defined to be the sum of the execution times of all the subtasks on the chain: $\sum_{u=1}^z e_i^{x_u}$. We denote by $\mathbf{Cpath}_i = \{\tau_i^{x_1}, \tau_i^{x_2}, \dots, \tau_i^{x_k}\}$ the longest chain, named tje \textit{critical path} of DAG $\tau_i$. Let $\mathbf{len}(\mathbf{Cpath}_i)$ denote the length of $\mathbf{Cpath}_i$. Each DAG has only one critica path. Note that there may exist multiple paths with the same maximum length among all paths in a DAG task. In such cases, we arbitrarily select one from them as the only critical path of this DAG.~\footnote{Note that $\mathbf{Cpath}_i$ can be found in time linear in the number of vertices and the number of edges in $\tau_i$, by first obtaining a topological order of the vertices of the graph and then running a straightforward dynamic program~\cite{bonifaci2013feasibility}.}
\end{definition}

\begin{definition}\label{definition:cputilization}
\textbf{(cp-utilization)} For each DAG $\tau_i$, we define $\sigma_i = \mathbf{len}(\mathbf{Cpath}_i)/p_i$ as the cp-utilization of $\tau_i$, which represents the total utilization of subtasks on the critical path of $\tau_i$.
\end{definition}

We now define several additional terminologies for the DAG task model:
\begin{itemize}
\item We denote by $C_i = \sum_{v=1}^{|V_i|}e_i^v$ the total execution time of each DAG  $\tau_i$.
\item For each DAG $\tau_i$ we define a utilization $u_i = C_i/p_i$. The total utilization of the DAG task system is $U_{sum} = \sum_{i=1}^n u_i$.
\item For a sporadic DAG task system $\tau$, we define its maximum cp-utilization $\sigma_{max}(\tau)$ to be the largest cp-utilization of any task in $\tau$: $\sigma_{max}(\tau) = \max_{\tau_i\in\tau}\sigma_i$.
\end{itemize}

We require $\mathbf{len}(\mathbf{Cpath}_i)\leq p_i$, and $U_{sum} \leq M$; otherwise, deadlines will be missed. Successive dag-jobs released by the same DAG are required to execute in sequence. We study the DAG task system under preemptive GEDF scheduling: dag-jobs with earlier deadlines are assigned higher priorities. Note that all jobs releaded by the subtasks in a DAG inherit the same priority assigned to the corresponding dag-job. We assume that ties are broken by task ID (lower IDs are favored). Throughout the paper, we assume that time is integral. Thus, a job that executes at time instant $t$ executes during the entire time interval $[t, t + 1)$.


\begin{figure}[H]
\centerline
{\includegraphics[width=0.48\textwidth]{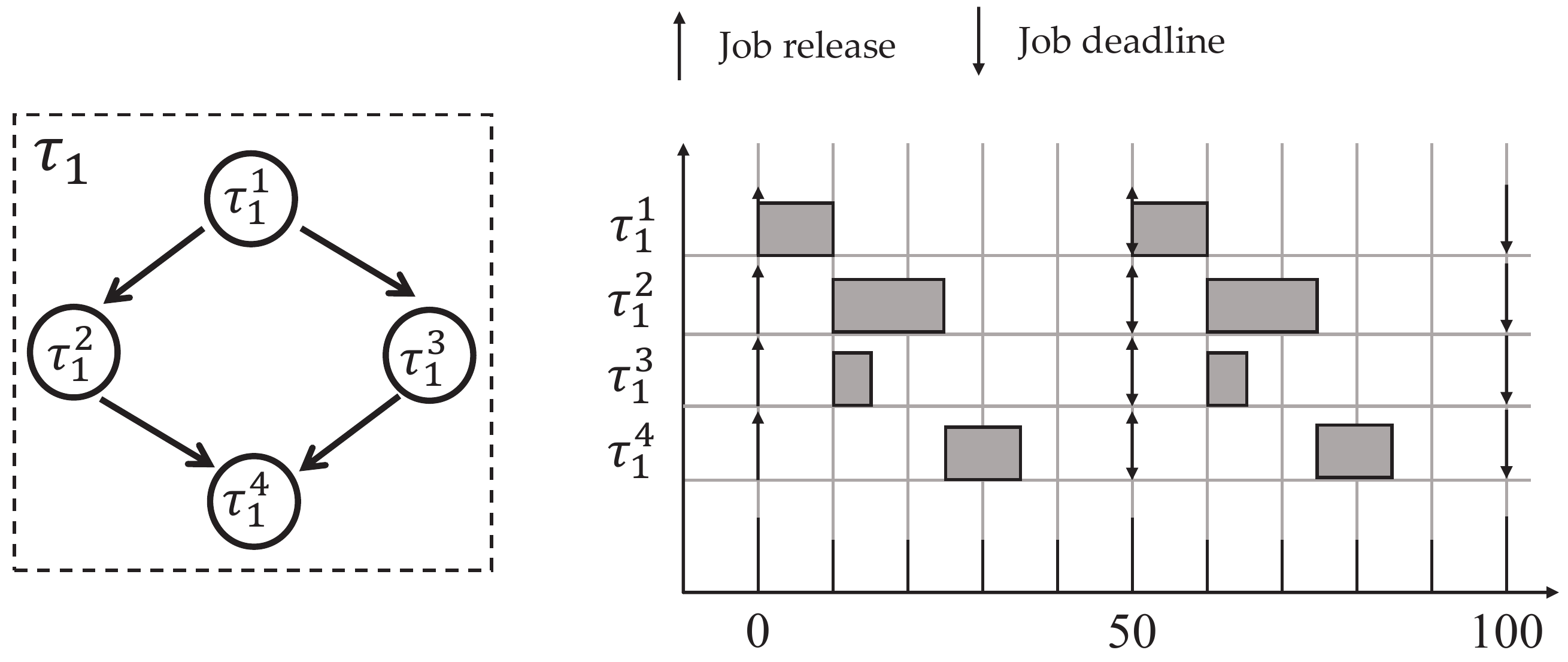}
}\caption{\footnotesize{Example DAG task.}}\label{fig:example}
\end{figure}

\begin{example}
Figure~\ref{fig:example} shows an example of scheduling a DAG task $\tau_1$ with a deadline of 50 time units on a two-processor system consisting of four subtasks, $\tau_1^1$, $\tau_1^2$, $\tau_1^3$ and $\tau_1^4$, with a worst-case execution time of $10$, $15$, $5$ and $10$ time units, respectively. For this DAG, its critical path is $\tau_1^1$, $\tau_1^2$, $\tau_1^4$, its cp-length is $\mathbf{len}(\mathbf{Cpath}_1)= 35$, and its cp-utilization is $\sigma_1 = 0.7$.
\end{example}



\section{Subtask Ordering: The Lazy-Cpath Policy}\label{sec:policy}

Due to intra-DAG precedence constraints, analyzing the schedulability of DAG task systems on a multiprocessor could be more challenging compared to the case of ordinary sporadic tasks. Precedence constraints among subtasks in a DAG make it hard to precisely analyze the subtasks' execution behavior and bound the interference workload upon certain analyzed jobs, as shown in several recent works~\cite{li2013outstanding, chen2014capacity, baruah2014improved, bonifaci2013feasibility}. For instance, a recent work~\cite{li2013outstanding} has demonstrated this challenge by assuming worst-case interference scenarios due to precedence constraints and developing the following utilization-based schedulability test for a sporadic DAG task system. The following quotes can be found in the abstract of~\cite{li2013outstanding}:

\begin{quote}
``\emph{For the proposed capacity augmentation bound of $4 - \frac{2}{M}$ for implicit deadline tasks under GEDF, we prove that if a task set has a total utilization of at most $\frac{M}{4-\frac{2}{M}}$ and each task's critical path length is no more than $\frac{1}{4-\frac{2}{M}}$ of its deadline, it can be scheduled on a machine with $M$ processors under GEDF. For the standard resource augmentation bound of $2 - \frac{1}{M}$ for arbitrary deadline tasks under GEDF, we prove that if an ideal optimal scheduler can schedule a task set on $M$ unit-speed processors, then GEDF can schedule the same task set on $M$ processors of speed $2-\frac{1}{M}$. However, this bound does not lead to a schedulabilty test since the ideal optimal scheduler is only hypothetical and is not known.}''
\end{quote}

When tasks are scheduled on a multiprocessor platform ($M\geq 2$), the above schedulability test~\cite{li2013outstanding} essentially requires that the total system utilization is no greater than $\frac{M}{3}$ and each task's critical path length is no more than $\frac{1}{3}$ of its deadline. Such constraints are rather pessimistic.

As quoted above, designing a good scheduler is critical to develop a corresponding efficient schedulability test. Our goal is thus to design a smarter runtime scheduler that efficiently supports DAG task scheduling on multiprocessors. In the rest of this section, we propose a Lazy-Cpath policy to precisely define the execution order of subtasks in a DAG. By combing the \cp policy with GEDF, the resulting scheduler yields several important properties about the execution behavior of the DAG tasks, which enable us to ultimately develop significantly improved schedulability tests.

\subsection{The Lazy-Cpath Policy}

We now formally define the \cp Policy and derive its  beneficial properties that enable us to develop an efficient schedulability test.



\noindent\textbf{The \cp policy:} Under the \cp policy, at any time instance, ready jobs of subtasks on a DAG's critical path has the lowest priority among all ready jobs of subtasks belonging to the same DAG. Note that for any DAG, at most one subtask on its critical path could have a ready job, because subtasks on any DAG's critical path form a single chain of subtasks whose released jobs must be executed sequentially. Thus, under the \cp policy, GEDF works as follows: at each time instant, the scheduler first tries to schedule as many ready jobs with the earliest deadlines; if the number of such ready jobs is larger than $M$, then GEDF first schedules ready jobs of subtasks that are not on the critical path of the corresponding DAG. Let CP-GEDF denote the scheduling strategy after applying the \cp policy to GEDF.

According to the above definition, the \cp policy does not define specific execution order among jobs of subtasks that are not on the critical path. Rather, it forces the ready jobs of subtasks on the critical path of each DAG to have the lowest priority among all ready jobs belonging to that DAG.  Thus, under the \cp policy, jobs of subtasks on the critical path of any DAG are executed as the last among all ready jobs of that DAG. Also note that for each dag-job,  any executing jobs of subtasks on the DAG's critical path may be preempted by newly released (and ready) jobs of subtasks which are not on the DAG's critical path. 

\begin{figure}[t]
\centering
\subfigure[\footnotesize An example DAG task $\tau_1$.]{\label{fig:T1task1}
\includegraphics[width=0.38 \columnwidth]{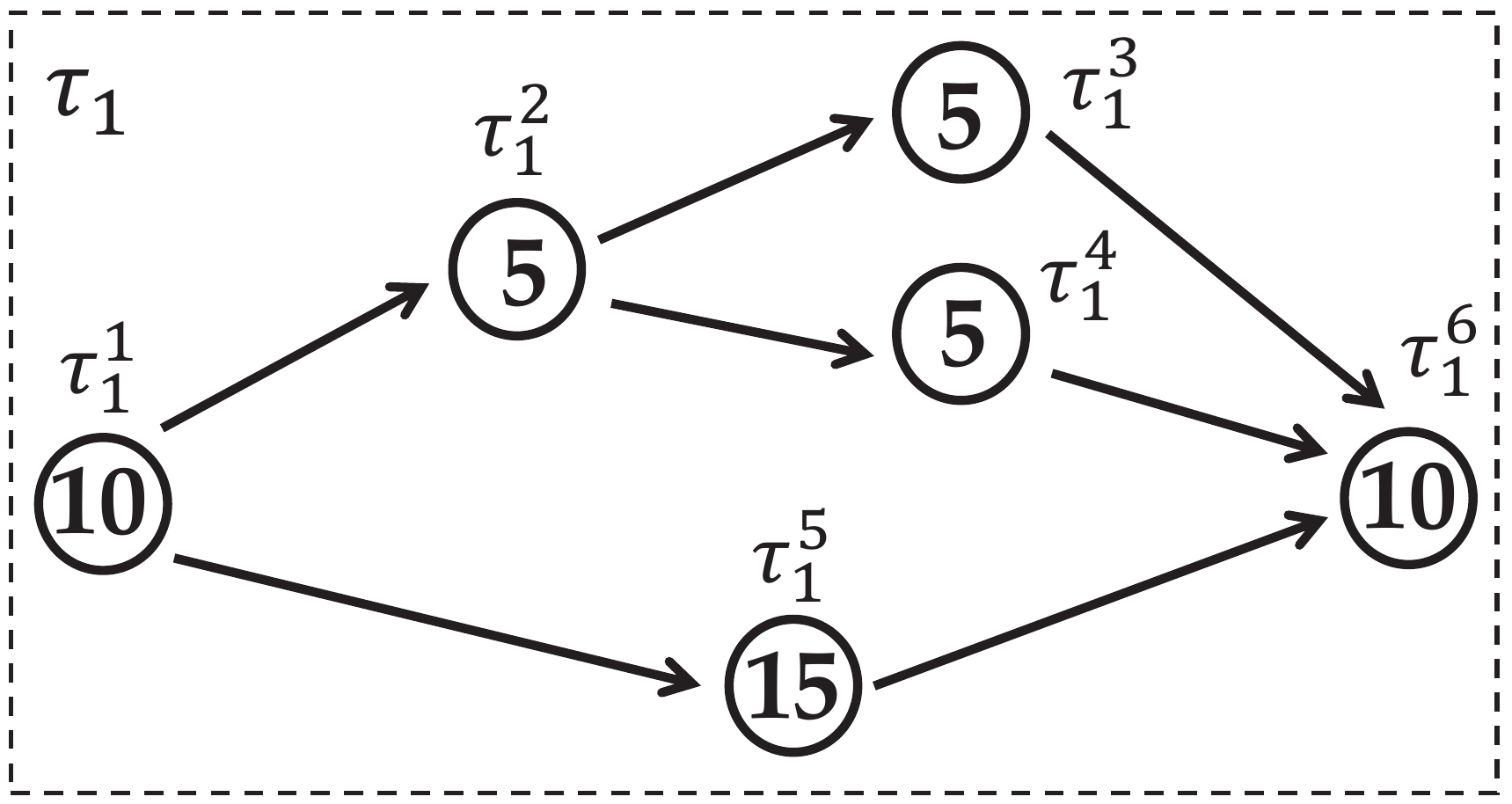}}
\hspace{+4pt}
\subfigure[The schedule of $\tau_1$.]{\label{fig:2processors}
\includegraphics[width=0.54\columnwidth]{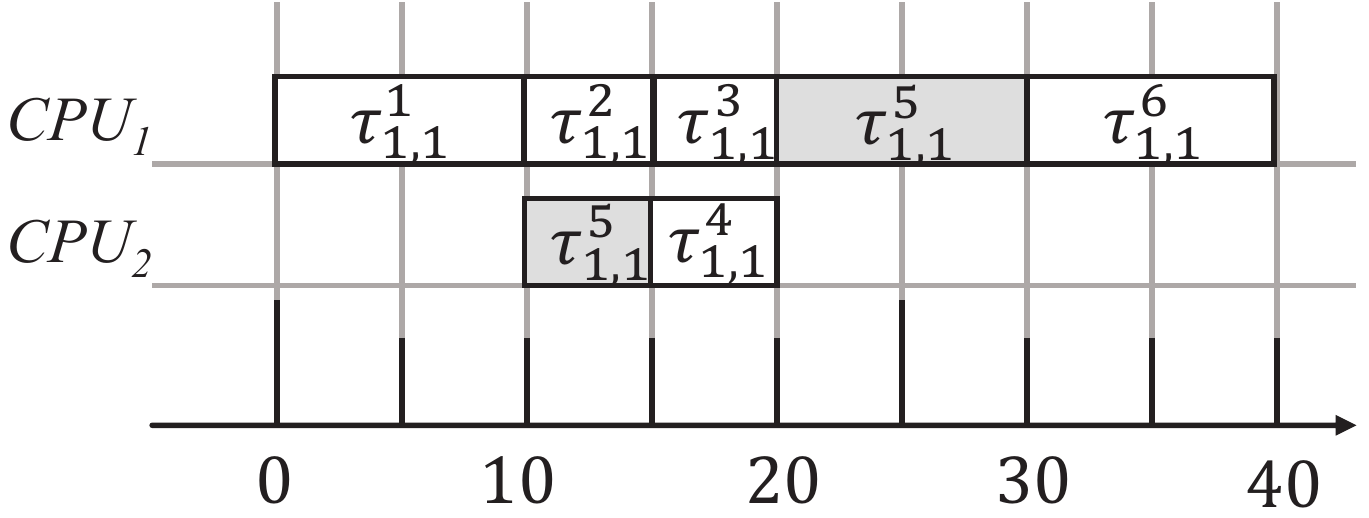}}
\caption{\footnotesize{Example illustrating CP-GEDF: $\tau_{1,1}^5$ is preempted by $\tau_{1,1}^3$ and $\tau_{1,1}^4$ due to the \cp policy.}}\label{Fig:lowestpriority}
\end{figure}


\begin{example}
Fig.~\ref{Fig:lowestpriority} shows that an example DAG $\tau_1$ scheduled on 2 processors under CP-GEDF. The DAG structure is shown in Fig.~\ref{fig:T1task1}, where vertices are labeled with the corresponding execution times. The critical path of $\tau_1$ is $\tau_1^1 \Rightarrow \tau_1^5 \Rightarrow\tau_1^6$. Fig.~\ref{fig:2processors} shows the schedule for the first dag-job of $\tau_1$. As seen in the figure, $\tau_{1,1}^5$ starts executing at time instant $10$. But according to the \cp policy, it gets preempted by $\tau_{1,1}^3$ and $\tau_{1,1}^4$ at time instant $15$. 
\end{example}

\subsection{Beneficial Properties of the \cp policy}
\label{sec:keyproperty}

We now derive beneficial properties of the CP-GEDF schedule. 

Let $\tau_{i,j}^{x_1}, \tau_{i,j}^{x_2},\dots, \tau_{i,j}^{x_k}$ denote the jobs of subtasks on the critical path for any dag-job $\tau_{i,j}$. Let $t_{i,j}^{x_u}$ denote the time instant when $\tau_{i,j}^{x_u}$ starts executing.

\begin{lemma}\label{lemma:readyjob}
 If we schedule a set $\tau = \{\tau_1, \dots, \tau_n\}$ of $n$ independent sporadic DAG tasks on $M$ identical processors under CP-GEDF, any job $\tau_{i,j}^{x_{u+1}}$ ($1\leq u \leq k-1$) belonging to the critical path of dag-job $\tau_{i,j}$ becomes ready when its predecessor job $\tau_{i,j}^{x_{u}}$ belonging to the critical path completes.
\end{lemma}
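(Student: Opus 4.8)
The plan is to reduce the statement to the claim that the critical-path predecessor $\tau_{i,j}^{x_u}$ is the \emph{last-completing} predecessor of $\tau_i^{x_{u+1}}$, and then establish that claim by induction on $u$ using the two defining ingredients of CP-GEDF. One direction is free: since $(\tau_i^{x_u},\tau_i^{x_{u+1}})\in E_i$, job $\tau_{i,j}^{x_u}$ is a predecessor of $\tau_{i,j}^{x_{u+1}}$, so $\tau_{i,j}^{x_{u+1}}$ cannot become ready before $\tau_{i,j}^{x_u}$ completes. Writing $f_{i,j}^{x_u}$ for the completion time of $\tau_{i,j}^{x_u}$, it remains to show $f_{i,j}^{y}\le f_{i,j}^{x_u}$ for every predecessor $\tau_i^{y}$ of $\tau_i^{x_{u+1}}$. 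I run an induction on $u$ whose hypothesis also records that $\tau_{i,j}^{x_v}$ becomes ready exactly at $f_{i,j}^{x_{v-1}}$ for $v\le u$; the base is that $\tau_{i,j}^{x_1}$ is ready at $r_{i,j}$, which is clean once we note $\mathbf{Cpath}_i$ may be chosen to start at a source subtask (prepending a predecessor never shortens a chain, so a longest chain can always be extended backward to a source).

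The first ingredient is a purely structural property of the critical path. For any predecessor $\tau_i^{y}$ of $\tau_i^{x_{u+1}}$, take a longest chain from a source to $\tau_i^y$ and append the suffix $\tau_i^{x_{u+1}},\tau_i^{x_{u+2}},\dots,\tau_i^{x_k}$; this is a chain of $G_i$, so by maximality of $\mathbf{Cpath}_i$ its length is at most $\mathbf{len}(\mathbf{Cpath}_i)$. Cancelling the shared suffix gives: the longest source-to-$\tau_i^y$ chain has length at most $\sum_{v=1}^{u}e_i^{x_v}$, i.e.\ no longer than the critical-path prefix ending at $\tau_i^{x_u}$ (and, in particular, $\tau_i^y$ is not a proper descendant of $\tau_i^{x_u}$). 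The second ingredient is the \cp policy itself: within dag-job $\tau_{i,j}$ there is at most one ready critical-path job at any instant, and it has strictly lower priority than every other ready job of $\tau_{i,j}$; hence it executes at instant $t$ only when fewer than $M$ higher-priority ready jobs exist, where the higher-priority ready jobs are exactly those from dag-jobs with higher GEDF priority together with the ready non-critical jobs of $\tau_{i,j}$.

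Combining the two, I would argue $f_{i,j}^{y}\le f_{i,j}^{x_u}$ as follows. By the inductive hypothesis the critical-path prefix $\tau_i^{x_1}\Rightarrow\cdots\Rightarrow\tau_i^{x_u}$ is executed as one sequential stream of $\sum_{v=1}^{u}e_i^{x_v}$ units that is displaced at every instant by \emph{all} ready non-critical jobs of $\tau_{i,j}$ on top of the higher-priority-dag-job interference, whereas the work that feeds $\tau_i^y$ consists of the critical-path-prefix subtasks — already known to finish by $f_{i,j}^{x_u}$ — together with non-critical subtasks of $\tau_{i,j}$ whose longest incoming chains carry no more total length than that prefix, and these latter never have to yield to $\tau_{i,j}$'s own critical-path job. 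To make this precise I would strengthen the induction to the statement that $f_{i,j}^{z}\le f_{i,j}^{x_u}$ for \emph{every} subtask $\tau_i^z$ of $\tau_{i,j}$ whose longest source-to-$\tau_i^z$ chain has length at most $\sum_{v=1}^{u}e_i^{x_v}$, and prove this enlarged statement by a nested induction over the subtasks of $\tau_{i,j}$ taken in nondecreasing order of longest-incoming-chain length, invoking the \cp priority rule at each readiness/execution decision; Lemma~\ref{lemma:readyjob} is then the special case where $\tau_i^z$ ranges over the predecessors of $\tau_i^{x_{u+1}}$, which satisfy the chain-length bound by the structural property above.

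The step I expect to be the genuine obstacle is exactly this nested comparison. Three points need care: (i) a non-critical predecessor of $\tau_i^{x_{u+1}}$ may itself depend on subtasks whose completion times are a priori uncontrolled — only the critical-path subtasks enjoy the clean readiness characterization we are proving — so one must follow the entire predecessor closure rather than a single chain; (ii) the higher-priority (inter-dag-job) interference must be charged uniformly against both the critical-path stream and the work feeding $\tau_i^y$, so that the comparison isolates precisely the extra displacement suffered by critical-path jobs; and (iii) the processor count $M$ and the tie-breaking among equal-priority subtasks of $\tau_{i,j}$ must be handled so that ``the critical-path job yields'' is applied at exactly the instants where it holds. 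Organizing everything as the chain-length-ordered strengthened induction is what keeps (i)--(iii) tractable.
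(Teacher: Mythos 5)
Your reduction (the lemma is equivalent to $\tau_{i,j}^{x_u}$ being the last-completing predecessor of $\tau_{i,j}^{x_{u+1}}$), your structural observation (cancelling the shared suffix against the maximality of $\mathbf{Cpath}_i$ shows every predecessor $\tau_i^y$ of $\tau_i^{x_{u+1}}$ has longest incoming chain of length at most $\sum_{v=1}^{u}e_i^{x_v}$), and your identification of the operative scheduling fact (a critical-path job of $\tau_{i,j}$ executes at $t$ only if every ready non-critical job of $\tau_{i,j}$ also executes at $t$) are all correct and are exactly the ingredients the paper uses. But the decisive step --- converting the static chain-length bound into the dynamic bound $f_{i,j}^{y}\le f_{i,j}^{x_u}$ --- is precisely the part you defer as ``the genuine obstacle,'' so as written the argument is a plan, not a proof. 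The strengthened induction you propose (over all subtasks of $\tau_{i,j}$ in nondecreasing order of longest-incoming-chain length) is also shakier than you suggest: the chain length of a predecessor of $\tau_i^z$ need not coincide with any prefix length $\sum_{v\le u'}e_i^{x_v}$, so the inductive hypothesis you would want to invoke at each step is not directly available, and chasing the entire predecessor closure of $\tau_i^y$ forces you to control completion times of subtasks about which the hypothesis says nothing.

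The paper closes this gap with a single-chain argument that you should adopt in place of the closure-tracking induction. Supposing for contradiction that some predecessor $\tau_{i,j}^{y_1}$ finishes after $\tau_{i,j}^{x_u}$, walk backward from $f_{i,j}^{y_1}$ through the chain of \emph{last-completing} predecessors $\tau_{i,j}^{y_2},\tau_{i,j}^{y_3},\dots$ down to a source $\tau_{i,j}^{y_w}$; this partitions $[r_{i,j},f_{i,j}^{y_1})$ into intervals $[f_{i,j}^{y_{q+1}},f_{i,j}^{y_q})$ in each of which the job $\tau_{i,j}^{y_q}$ is ready throughout. By the \cp rule, at any instant of such an interval at which $\tau_{i,j}^{y_q}$ is not executing, no critical-path job of $\tau_{i,j}$ executes either; hence the work completed by $\tau_{i,j}^{y_q}$ in its interval is at least the work completed by critical-path jobs there, and strictly more in the final interval since $f_{i,j}^{y_1}>f_{i,j}^{x_u}$. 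Summing over the intervals shows the path $\tau_i^{y_w}\Rightarrow\cdots\Rightarrow\tau_i^{y_1}\Rightarrow\tau_i^{x_{u+1}}\Rightarrow\cdots\Rightarrow\tau_i^{x_k}$ is strictly longer than $\mathbf{Cpath}_i$, the desired contradiction. This resolves your concerns (i)--(iii) at once: only one chain is ever followed, the inter-dag-job interference is charged identically to both sides because the comparison is instant-by-instant within each interval, and no tie-breaking among non-critical jobs is needed because the only fact used is that the critical-path job never runs while a ready sibling waits.
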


\begin{proof}
Let $\mathcal{S}$ denote the CP-GEDF schedule and $f_{i,j}^{x_u}$ denote the completion time of job $\tau_{i,j}^{x_u}$. To prove this lemma, it suffices to prove that the job $\tau_{i,j}^{x_{u}}$ belonging to the critical path is the last completed job among all predecessor jobs of $\tau_{i,j}^{x_{u+1}}$. We prove this new proof obligation by contradiction. Assume that $\tau_{i,j}^{x_{u+1}}$ has another predecessor job $\tau_{i,j}^{y_1}$ which completes later than $\tau_{i,j}^{x_{u}}$. Let $f_{i,j}^{y_1}$ denote the completion time of $\tau_{i,j}^{y_1}$.

Since under GEDF, dag-jobs released by different DAG tasks have distinct priorities and all jobs belonging to a dag-job inherit the same priority of the dag-job, the relative execution ordering of all jobs belonging to dag-job $\tau_{i,j}$ 			including $\tau_{i,j}^{x_{u}}$ and $\tau_{i,j}^{y_1}$  does not depend on other dag-jobs, but solely depends on the DAG structure of $\tau_i$ and the \cp policy.

\begin{figure}[t]
\centerline
{\includegraphics[width=0.45\textwidth]{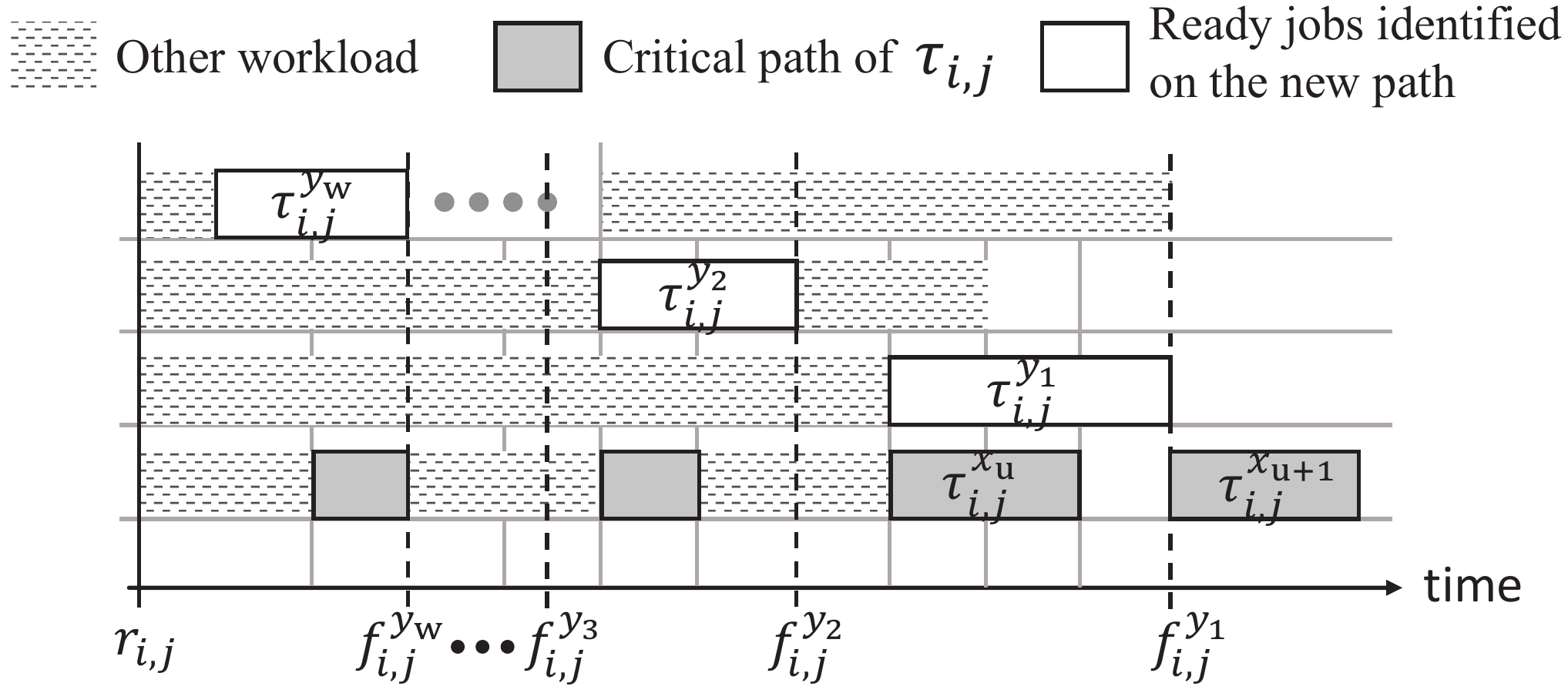}
}\caption{\footnotesize{Time intervals with respect to dag-job $\tau_{i,j}$.
}}\label{fig:illustrationlemmaone}
\end{figure}

We analyze the interval $[r_{i,j}, f_{i,j}^{y_1})$ by dividing it into $w \geq 1$ time intervals, denoted by $[f_{i,j}^{y_2}, f_{i,j}^{y_1}),  [f_{i,j}^{y_3}, f_{i,j}^{y_2}), \dots, [r_{i,j}, f_{i,j}^{y_w})$, ordered from right to left with respect to time, as illustrated in Fig.~\ref{fig:illustrationlemmaone}.
We identify these time intervals by moving from right to left with respect to time in the schedule $\mathcal{S}$ considering jobs belonging to the dag-job $\tau_{i,j}$.

We identify this interval set by finding a path in $\tau_i$ starting from a source subtask and ending at the subtask $\tau_i^{y_1}$.
Moving from the time instant $f_{i,j}^{y_1}$ to the left in $\mathcal{S}$, let $f_{i,j}^{y_2}$ denote the latest completion time among $\tau_{i,j}^{y_1}$'s predecessor jobs. Note that  $\tau_{i,j}^{y_1}$ becomes ready at $f_{i,j}^{y_2}$. We thus identify the first interval$[f_{i,j}^{y_2}, f_{i,j}^{y_1})$ in this set.
Moving from  $f_{i,j}^{y_2}$ to the left in $\mathcal{S}$, we apply this same process to identify the remaining intervals in this set, until we find the source subtask $\tau_{i}^{y_w}$ for this path, which is ready at the release time of the corresponding dag-job $\tau_{i,j}$ at $r_{i,j}$ and completes at $f_{i,j}^{y_w}$. Note that this path always exists because $\tau_{i,j}^{y_1}$ exists.

We now show a contradiction that there exists another path consisting of $\tau_{i}^{y_w},  \dots, \tau_{i}^{y_1}, \tau_{i}^{x_{u+1}}, \dots, \tau_{i}^{x_k}$ that is longer than the critical path of $\tau_i$. For each time interval $[f_{i,j}^{y_{q+1}}, f_{i,j}^{y_q})$ $(2\leq q\leq w-1)$, we know that $\tau_{i,j}^{y_q}$ is ready at the beginning of this interval $f_{i,j}^{y_{q+1}}$ according to the definition of the interval. Thus, each interval $[f_{i,j}^{y_{q+1}}, f_{i,j}^{y_q})$ (including $[r_{i,j}, f_{i,j}^{y_w})$) only consists of two kinds of subintervals: (\textit{i}) subintervals during which $\tau_{i,j}^{y_q}$ executes continuously, and (\textit{ii}) subintervals during which $\tau_{i,j}^{y_q}$ are not executing (i.e., being preempted by jobs belonging to other dag-jobs with higher priorities than $\tau_{i,j}$ or being delayed by jobs belonging to the same dag-job). During the first kind of subintervals, jobs from the subtasks on the critical path of $\tau_i$ may execute; but during the second kind of subintervals, jobs from the subtasks on the critical path of $\tau_i$ do not execute due to the \cp policy, for otherwise $\tau_{i,j}^{y_q}$ should have been executing during such subintervals. Thus, during each $[f_{i,j}^{y_{q+1}}, f_{i,j}^{y_q})$ ($(2\leq q\leq w-1)$), the workload executed due to $\tau_{i,j}^{y_q}$ is at least the workload due to jobs from the subtasks on the critical path of $\tau_i$. Moreover, within the last interval $[f_{i,j}^{y_{2}}, f_{i,j}^{y_1})$, since $\tau_{i,j}^{y_1}$ completes later than $\tau_{i,j}^{x_u}$, we know that the workload executed due to $\tau_{i,j}^{y_1}$ must be strictly greater than the workload due to jobs from the subtasks on the critical path of $\tau_i$. Therefore, the total workload executed within $[r_{i,j}, f_{i,j}^{y_{1}})$ due to the subtask set $\{\tau_{i}^{y_w},  \dots, \tau_{i}^{y_2}, \tau_{i}^{y_1}\}$ must be strictly greater than the total workload due to the subset of subtasks on the critical path of $\tau_i$ (this subset including subtasks on the partical critical path starting from the source subtask on the critical path and ending at $\tau_i^{x_u}$).

Clearly, we have identified another path in $\tau_i$, composed by $\tau_{i}^{y_w} \Rightarrow \tau_{i}^{y_{w-1}} \Rightarrow \dots \Rightarrow \tau_{i}^{y_2} \Rightarrow \tau_{i}^{y_1} \Rightarrow \tau_i^{x_{u+1}} \Rightarrow \Gamma$, where $\Gamma$ represents the remaining subtasks following $\tau_i^{x_{u+1}}$ on the critical path of $\tau_i$. And this path is longer than the critical path of $\tau_i$. A contradiction is reached.
\end{proof}

\begin{lemma}\label{lemma:lastjob}
 If we schedule a set $\tau = \{\tau_1, \dots, \tau_n\}$ of $n$ independent sporadic DAG tasks on $M$ identical processors under CP-GEDF, a dag-job $\tau_{i,j}$ of $\tau_i$ completes its execution when the  job $\tau_{i,j}^{x_k}$ of the last subtask $\tau_{i}^{x_k}$ on the critical path complete its execution.
\end{lemma}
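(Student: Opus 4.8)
Since a dag-job completes precisely when the last of its $|V_i|$ constituent subtask jobs completes, the lemma is equivalent to showing that $\tau_{i,j}^{x_k}$ is the last job of dag-job $\tau_{i,j}$ to complete. The plan is to reduce this to Lemma~\ref{lemma:readyjob} via a small gadget. Augment $\tau_i$ with one extra ``virtual'' subtask $\tau_i^{\ast}$ of execution time $0$ that has an incoming edge from every other subtask of $\tau_i$; call the resulting DAG task $\tau_i'$, and schedule the system with $\tau_i$ replaced by $\tau_i'$. Because $\tau_i^{\ast}$ never occupies a processor, it is completely transparent to CP-GEDF: the resulting schedule is identical to the original one, except that each dag-job $\tau_{i,j}$ now additionally contains a phantom job $\tau_{i,j}^{\ast}$ which becomes \emph{ready} exactly when every other job of $\tau_{i,j}$ has completed and ``completes'' at that same instant.

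Next I would arrange the critical path of $\tau_i'$ so that $\tau_i^{\ast}$ is its last subtask. Since execution times are positive, $\tau_i^{x_k}$ must be a sink of $\tau_i$ (otherwise $\mathbf{Cpath}_i$ could be extended, contradicting maximality); hence $\mathbf{Cpath}_i$ followed by $\tau_i^{\ast}$ is a chain of $\tau_i'$ of length $\mathbf{len}(\mathbf{Cpath}_i)$, and no chain of $\tau_i'$ is longer, so we may designate this chain to be the critical path of $\tau_i'$. Applying Lemma~\ref{lemma:readyjob} to the augmented system then yields that the last critical-path job $\tau_{i,j}^{\ast}$ becomes ready exactly when its critical-path predecessor $\tau_{i,j}^{x_k}$ completes. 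Combining this with the defining property of $\tau_i^{\ast}$ --- that $\tau_{i,j}^{\ast}$ becomes ready only once \emph{all} jobs of $\tau_{i,j}$ have completed --- forces $\tau_{i,j}^{x_k}$ to be the last among all jobs of $\tau_{i,j}$ to complete, so the dag-job completes at the completion time of $\tau_{i,j}^{x_k}$, which is the claim.

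The step I expect to require the most care is checking that Lemma~\ref{lemma:readyjob} genuinely applies to $\tau_i'$: its statement is for an arbitrary independent sporadic DAG task system and for any maximum-length chain chosen as the critical path, so it does apply, but one should spell out that inserting a zero-cost subtask whose only predecessors issue is the ``sink of the critical path'' leaves the schedule of all real jobs untouched and does not shorten the critical path. If one prefers to avoid the gadget, the conclusion can instead be obtained directly in the style of Lemma~\ref{lemma:readyjob}. Suppose for contradiction some job $\tau_{i,j}^{z}$ completes strictly later than $\tau_{i,j}^{x_k}$; then $\tau_i^{z}\notin\mathbf{Cpath}_i$, since by Lemma~\ref{lemma:readyjob} the critical-path jobs complete in increasing order with $\tau_{i,j}^{x_k}$ last. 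Reconstruct, exactly as in Lemma~\ref{lemma:readyjob}, the backward chain of intervals $[f_{i,j}^{y_{q+1}},f_{i,j}^{y_q})$ along a path $P$ from a source of $\tau_i$ to $\tau_i^{z}$. In each such interval a critical-path job of $\tau_{i,j}$ can execute only while $\tau_{i,j}^{y_q}$ itself executes --- by the \cp tie-break when $\tau_i^{y_q}\notin\mathbf{Cpath}_i$, and because at most one critical-path job of $\tau_{i,j}$ is ever ready when $\tau_i^{y_q}\in\mathbf{Cpath}_i$ --- so the total critical-path workload of $\tau_{i,j}$ executed in $[r_{i,j},f_{i,j}^{z})$ is at most $\sum_q e_i^{y_q}=\mathbf{len}(P)\le\mathbf{len}(\mathbf{Cpath}_i)$, and in fact strictly less because the final execution unit of $\tau_{i,j}^{z}$ runs after every critical-path job of $\tau_{i,j}$ has already finished. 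Since all of $\tau_{i,j}$'s critical-path jobs run to completion inside this window, that same workload equals $\mathbf{len}(\mathbf{Cpath}_i)$ --- a contradiction.
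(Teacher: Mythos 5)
Your proposal is correct, and your fallback ``direct'' argument is essentially the paper's own proof: the paper's appendix proves this lemma by assuming some job $\tau_{i,j}^{y_0}$ completes after $\tau_{i,j}^{x_k}$, walking backwards through completion-time intervals to build a path from a source to $\tau_i^{y_0}$, and arguing via the \cp policy that critical-path jobs can only execute while the path's jobs execute, so the constructed path would be strictly longer than $\mathbf{Cpath}_i$ --- exactly your interval/workload-counting argument, with your version making the counting (at most one critical-path job ready at a time, one strict unit lost during the final execution of $\tau_{i,j}^{z}$) somewhat more explicit than the paper's. Your primary route, the zero-cost virtual sink $\tau_i^{\ast}$ with incoming edges from all subtasks, is a genuinely different and economical reduction: it turns ``the dag-job completes when the last critical-path job completes'' into a single application of Lemma~\ref{lemma:readyjob} with $u=k$, so the paper's near-verbatim duplication of that lemma's proof becomes unnecessary. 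What it costs is a small amount of model bookkeeping: the system model posits positive integral execution times and integral time slots, so a zero-cost subtask is not literally an instance of the model, and you must (as you note) verify that Lemma~\ref{lemma:readyjob}'s proof survives in the augmented DAG --- it does, since $\tau_i^{\ast}$ contributes zero length to both the critical path and any competing path, so the strict-inequality step is unaffected, and since $\tau_i^{\ast}$ never occupies a processor the \cp tie-breaking and the schedule of all real jobs are unchanged. Either route closes the lemma; the direct one matches the paper, the gadget buys brevity at the price of a one-line extension of the task model.
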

\begin{proof}
This lemma can be proved in the same manner as lemma~\ref{lemma:readyjob}, by reaching a contradiction where an identified path in $\tau_i$ is longer than the critical path of $\tau_i$. For completeness, we put the detailed proof in the appendix.
\end{proof}

\section{Schedulability Analysis}\label{sec:schedulability}

We now present our schedulability analysis for DAG task systems scheduled under CP-GEDF. Our approach is fundamentally based on the window-based reasoning framework which was first proposed by Baker~\cite{baker2003} and has been used extensively to analyze the ordinary sporadic task model~\cite{bertogna2009schedulability, baruah2007techniques, bertogna2005improved}. Due to the complex DAG structure and intra-DAG precedence constraints, the original window-based reasoning could not precisely characterize the workload distribution in an analysis window. We first enhance this framework by developing an analysis technique based on a novel concept of executing/non-executing critical path interval.

\subsection{Executing/non-executing intervals}\label{sec:executingnonexectuing}

We first present the formal definition of an executing/non-executing critical path interval.


\begin{definition}\label{def:executinginterval}
\textbf{(Executing/non-executing critical path interval)} A time interval $[t_1, t_2)$ is an executing critical path interval for $\tau_i$ if the following three conditions hold: (\text{i}) a DAG job $\tau_{i,j} $ of $\tau_i$ is released at or before $t_1$; (\text{ii}) $\tau_{i,j}$ completes its execution no earlier than $t_2$; (\text{iii}) jobs of subtasks on $\tau_{i}$'s critical path are executing continuously throughout this interval. Otherwise, if conditions (\text{i}) and (\text{ii}) hold, but no job of any subtask on $\tau_i$'s critical path executes at any time instant within $[t_1, t_2)$, then $[t_1, t_2)$ is a non-executing critical path interval for $\tau_i$.
\end{definition}

\begin{definition}
\label{def:busyinterval}
\textbf{(Busy interval)} a time instant $t$ is a busy instant if all $M$ processors execute jobs at $t$. A time interval $[t_1, t_2)$ is busy if every time instant $t$ in $[t_1, t_2)$ is busy.
\end{definition}

\begin{figure}
\centerline
{\includegraphics[width=0.48\textwidth]{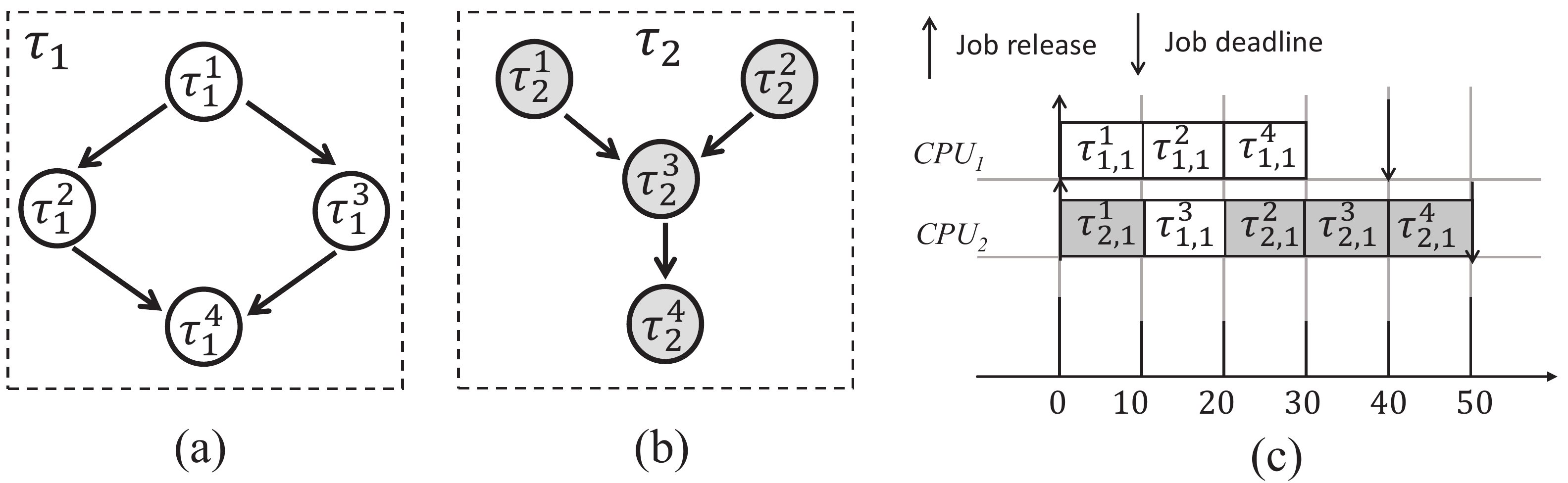}
}\caption{\footnotesize{Example illustrating the executing/non-executing critical path interval and the busy interval.}}\label{fig:exenonexe}
\end{figure}

\begin{example}
Fig.~\ref{fig:exenonexe} shows that two DAG tasks are scheduled on two processors under CP-GEDF. The DAG structures for $\tau_1$ and $\tau_2$ are shown in Fig.~\ref{fig:exenonexe}(a) and Fig.~\ref{fig:exenonexe}(b), where the execution cost for each subtask is 10 time units. Fig.~\ref{fig:exenonexe}(c) shows the schedule for the first dag-jobs of $\tau_1$ and $\tau_2$. Since the periods of $\tau_1$ and $\tau_2$ are 40 time units and 50 time units, respectively, and both $\tau_1$ and $\tau_2$ release the first dag-jobs at time 0, jobs belonging to the first dag-job of $\tau_1$ have higher priorities than those of $\tau_2$. According to the DAG structure, the critical path of $\tau_2$ is $\tau_2^1 \Rightarrow \tau_2^3 \Rightarrow\tau_2^4$.  As seen in the Fig.~\ref{fig:exenonexe}(c), $\tau_{2,1}^1$ starts executing at
time $0$, but $\tau_{2,1}^3$ is preempted during $[10, 30)$ according to GEDF-CP. In this example, $[0, 10)$ and $[30, 50)$ are executing critical path intervals for $\tau_2$, and $[10, 30)$ is a non-executing critical path interval for $\tau_2$. $[0, 30)$ is clearly a busy interval. 
\end{example}


\begin{lemma}\label{lemma:busy}
In the CP-GEDF schedule for a DAG task system $\tau$, if $[t_1, t_2)$ is a non-executing critical path interval for $\tau_i$, then $[t_1, t_2)$ is a busy interval.
\end{lemma}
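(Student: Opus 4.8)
The plan is to argue by contradiction: suppose $[t_1, t_2)$ is a non-executing critical path interval for $\tau_i$ but is not busy, so there is some time instant $t \in [t_1, t_2)$ at which fewer than $M$ processors are executing jobs, i.e. at least one processor is idle at $t$. I would then exploit the definition of a non-executing critical path interval together with the \cp policy to derive a contradiction. By Definition~\ref{def:executinginterval}, a dag-job $\tau_{i,j}$ of $\tau_i$ is released at or before $t_1$ and completes no earlier than $t_2$; since time is integral and $t < t_2$, the dag-job $\tau_{i,j}$ has not completed by time $t$. By Lemma~\ref{lemma:lastjob}, $\tau_{i,j}$ completes exactly when the job $\tau_{i,j}^{x_k}$ of the last critical-path subtask completes, so at time $t$ the critical-path portion of $\tau_{i,j}$ is not yet finished: there is a critical-path job $\tau_{i,j}^{x_u}$ that has not completed by $t$.

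The key step is to show that this unfinished critical-path job $\tau_{i,j}^{x_u}$ must in fact be \emph{ready} at time $t$. Here I would invoke Lemma~\ref{lemma:readyjob}: a critical-path job $\tau_{i,j}^{x_{u+1}}$ becomes ready precisely when its critical-path predecessor $\tau_{i,j}^{x_u}$ completes. Let $\tau_{i,j}^{x_u}$ be the \emph{first} critical-path job of $\tau_{i,j}$ that is not complete by time $t$ (if $u=1$ the source critical-path job is ready from $r_{i,j} \le t_1 \le t$; if $u>1$, then $\tau_{i,j}^{x_{u-1}}$ has completed by $t$, so by Lemma~\ref{lemma:readyjob} $\tau_{i,j}^{x_u}$ is ready by $t$). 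Hence at time $t$ there is a ready, uncompleted critical-path job of $\tau_i$.

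Now I reach the contradiction. At time $t$ at least one processor is idle, yet there is a ready job (namely $\tau_{i,j}^{x_u}$) that is not executing. Under GEDF, any ready job is scheduled whenever a processor is free — the \cp policy only reorders ready jobs \emph{within} a dag-job's priority class, demoting critical-path jobs below the dag-job's non-critical-path jobs, but it never leaves a processor idle while a ready job waits. So $\tau_{i,j}^{x_u}$ would be executing at $t$, contradicting the assumption that $[t_1, t_2)$ is a non-executing critical path interval (where, by Definition~\ref{def:executinginterval}, no critical-path job of $\tau_i$ executes at any instant in $[t_1,t_2)$). Therefore every instant in $[t_1, t_2)$ is busy, and $[t_1, t_2)$ is a busy interval.

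The main obstacle is the middle step — rigorously pinning down that an uncompleted critical-path job is actually ready (not merely existent but blocked by a critical-path predecessor). This requires care in choosing $u$ to be the index of the first incomplete critical-path job and then applying Lemma~\ref{lemma:readyjob} to its immediate critical-path predecessor; the subtlety is that Lemma~\ref{lemma:readyjob} only tells us a critical-path job's readiness is governed by its critical-path predecessor, which is exactly what lets us conclude readiness without worrying about the other (non-critical-path) predecessors. Everything else is a direct appeal to the work-conserving nature of GEDF and the definitions.
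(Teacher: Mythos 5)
Your proposal is correct and follows essentially the same route as the paper's own proof: contradiction via an idle processor, Lemma~\ref{lemma:lastjob} to obtain an uncompleted critical-path job, Lemma~\ref{lemma:readyjob} to show such a job is ready, and work-conservation of GEDF to contradict the non-executing condition of Definition~\ref{def:executinginterval}. Your treatment of the middle step (choosing the \emph{first} incomplete critical-path job and arguing readiness via its critical-path predecessor) is in fact spelled out more carefully than in the paper, which simply asserts that a ready job in the set exists.
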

\begin{proof}
Let $\tau_{i,j}$ denote the dag-job that is released before $t_1$ and does not complete before $t_2$. We prove this lemma by contradiction. Suppose $[t_1, t_2)$ is a non-executing critical path interval for $\tau_i$, and $[t_1, t_2)$ is not a busy interval. By Def.~\ref{def:busyinterval}, at least one processor is idle during $[t_1, t_2)$. By Lemma~\ref{lemma:lastjob}, some jobs of subtasks on $\tau_i$'s critical path have not completed before $t_2$. Let $\beta$ denote the set of such jobs. Note that $\beta$ must exist because at least the job of the last subtask on $\tau_i$'s critical path has not completed before $t_2$. According to Lemma~\ref{lemma:readyjob}, there must be a job in $\beta$ , denoted by $J$, that is ready at $t_1$. Thus, $J$ must be executing on the idle processor during $[t_1, t_2)$. However, by Def.~\ref{def:executinginterval}, no job of any subtask on $\tau_i$'s critical path shall execute at any time instant within $[t_1, t_2)$. A contradiction is thus reached.
\end{proof}

\subsection{A Necessary Condition for Deadline Misses}
\label{sec:necessarycondition}

We focus on analyzing what happens when a deadline is missed given any DAG task system $\tau$ scheduled under CP-GEDF on $M$ identical processors. Let $t_d$ denote the first time instant in any such schedule $\mathcal{S}$ at which a deadline is missed. Let dag-job $\tau_{h,l}$ be the one that misses its deadline $d_{h,l}$ at $t_d$, which is released by task $\tau_h$ at $r_{h,l}$. Note that dag-jobs with deadlines later than $t_d$ do not affect the scheduling of dag-jobs with deadlines no later than $t_d$. Thus, we remove every dag-job with a deadline later than $t_d$ from $\mathcal{S}$.

\begin{definition}
\textbf{(Problem task, problem job, problem window)} based on the above discussion, $\tau_h$ is a problem task, $\tau_{h,l}$ is a problem dag-job, and the time interval $[r_{h,l}, d_{h,l})$ is a problem window.
\end{definition}

\begin{definition}\label{def:workload}
\textbf{(Workload)} the workload $W$ within a time interval $[t, t+\Delta)$ is the total amount of computation executed within this time interval in the schedule $\mathcal{S}$.
\end{definition}

\begin{definition}
\textbf{(Average workload)} the average workload within a time interval $[t, t+\Delta)$ is $\frac{W}{\Delta}$, where $W$ is the workload within this interval.
\end{definition}

The following lower bound on the average workload of a problem window can be observed in the schedule $\mathcal{S}$, which is a necessary condition for $\tau_{h,l}$ to miss its deadline.

\begin{lemma}
\label{lemma:misscondition}
Since $\tau_{h,l}$ misses its deadline at $t_d$, the sum of the lengths of all non-executing critical path intervals for $\tau_h$ within $[r_{h,l}, t_d)$ must exceed $p_h-\mathbf{len}(\mathbf{Cpath}_h)$.
\end{lemma}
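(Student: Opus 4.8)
The plan is to partition the problem window $[r_{h,l}, t_d)$ into executing and non-executing critical path intervals for $\tau_h$, observe that the total executing length accounts for all the critical-path progress made by $\tau_{h,l}$, and then use the deadline miss (via Lemma~\ref{lemma:lastjob}) to conclude that this progress is strictly less than $\mathbf{len}(\mathbf{Cpath}_h)$. Since deadlines are implicit and $\tau_{h,l}$ misses at $t_d$, we have $t_d = d_{h,l} = r_{h,l} + p_h$, so the window has length exactly $p_h$.

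First I would establish that within $[r_{h,l}, t_d)$ the only jobs of subtasks on $\tau_h$'s critical path that can execute are those of the problem job $\tau_{h,l}$. The preceding dag-job $\tau_{h,l-1}$ has deadline $d_{h,l-1} = r_{h,l-1} + p_h \le r_{h,l}$ and, since $t_d$ is the \emph{first} deadline miss, must have completed by $r_{h,l}$ (recall successive dag-jobs of the same task execute in sequence); the succeeding dag-job $\tau_{h,l+1}$ is released no earlier than $r_{h,l} + p_h = t_d$. Hence at every instant of $[r_{h,l}, t_d)$ exactly one of two situations holds: some critical-path job of $\tau_{h,l}$ executes, or no critical-path job of $\tau_h$ executes at all. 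Grouping the maximal runs of the first kind yields a collection of executing critical path intervals — conditions (i) and (ii) of Def.~\ref{def:executinginterval} hold throughout, because $\tau_{h,l}$ is released at $r_{h,l}$ and, by the deadline miss, does not complete before $t_d$ — and the complementary maximal runs are non-executing critical path intervals. These two collections partition $[r_{h,l}, t_d)$, so their total lengths sum to $p_h$.

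Next I would bound the total executing length. By the \cp policy and the fact that a critical path is a single chain, at most one subtask on $\tau_h$'s critical path has a ready job at any instant, so inside an executing critical path interval exactly one critical-path job of $\tau_{h,l}$ runs per unit time; therefore the total length of all executing critical path intervals equals the total execution performed by critical-path jobs of $\tau_{h,l}$ inside the window. By Lemma~\ref{lemma:lastjob}, $\tau_{h,l}$ completes only when its last critical-path job $\tau_{h,l}^{x_k}$ completes; since $\tau_{h,l}$ misses its deadline, $\tau_{h,l}^{x_k}$ has not completed by $t_d$, hence not all critical-path work has been done and the executed critical-path amount is strictly less than $\sum_{u=1}^{k} e_h^{x_u} = \mathbf{len}(\mathbf{Cpath}_h)$. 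Combining with the partition, $p_h = (\text{total executing length}) + (\text{total non-executing length}) < \mathbf{len}(\mathbf{Cpath}_h) + (\text{total non-executing length})$, which rearranges to exactly the claimed inequality.

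The step I expect to require the most care is the clean dichotomy in the second paragraph: ruling out that a critical-path job of a \emph{different} dag-job of $\tau_h$ executes somewhere in $[r_{h,l}, t_d)$, so that every instant is unambiguously covered by an executing or a non-executing critical path interval for $\tau_h$. This is precisely where the "first deadline miss" hypothesis and the in-sequence execution of successive dag-jobs are load-bearing; Lemmas~\ref{lemma:readyjob} and~\ref{lemma:lastjob} then do the remaining work of equating executing-interval length with critical-path progress and translating "missed deadline" into "strictly less than $\mathbf{len}(\mathbf{Cpath}_h)$ of critical-path work completed."
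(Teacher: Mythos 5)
Your proof is correct and follows essentially the same route as the paper's: partition $[r_{h,l}, t_d)$ into executing and non-executing critical path intervals, invoke Lemma~\ref{lemma:lastjob} and the sequential execution of critical-path jobs to bound the total executing length strictly below $\mathbf{len}(\mathbf{Cpath}_h)$, and subtract from the window length $p_h$. You are in fact more careful than the paper, which silently assumes the dichotomy you justify in your second paragraph (that no critical-path job of a different dag-job of $\tau_h$ executes inside the window), so no changes are needed.
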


\begin{proof}
According to lemma~\ref{lemma:lastjob}, a dag-job completes when the job of the last subtask on its critical path completes. Since $\tau_{h,l}$ has not completed by $t_d$, the job of the last subtask on $\tau_{h}$'s critical path  has not completed by $t_d$. We can divide $[r_{h,l}, t_d)$ into subintervals including either executing or non-executing critical path intervals for $\tau_{h}$ by Def.~\ref{def:executinginterval}. Since the jobs of subtasks on the critical path must execute sequentially, this lemma immediately follows. 
\end{proof}

\begin{figure}
\centerline
{\includegraphics[width=0.48\textwidth]{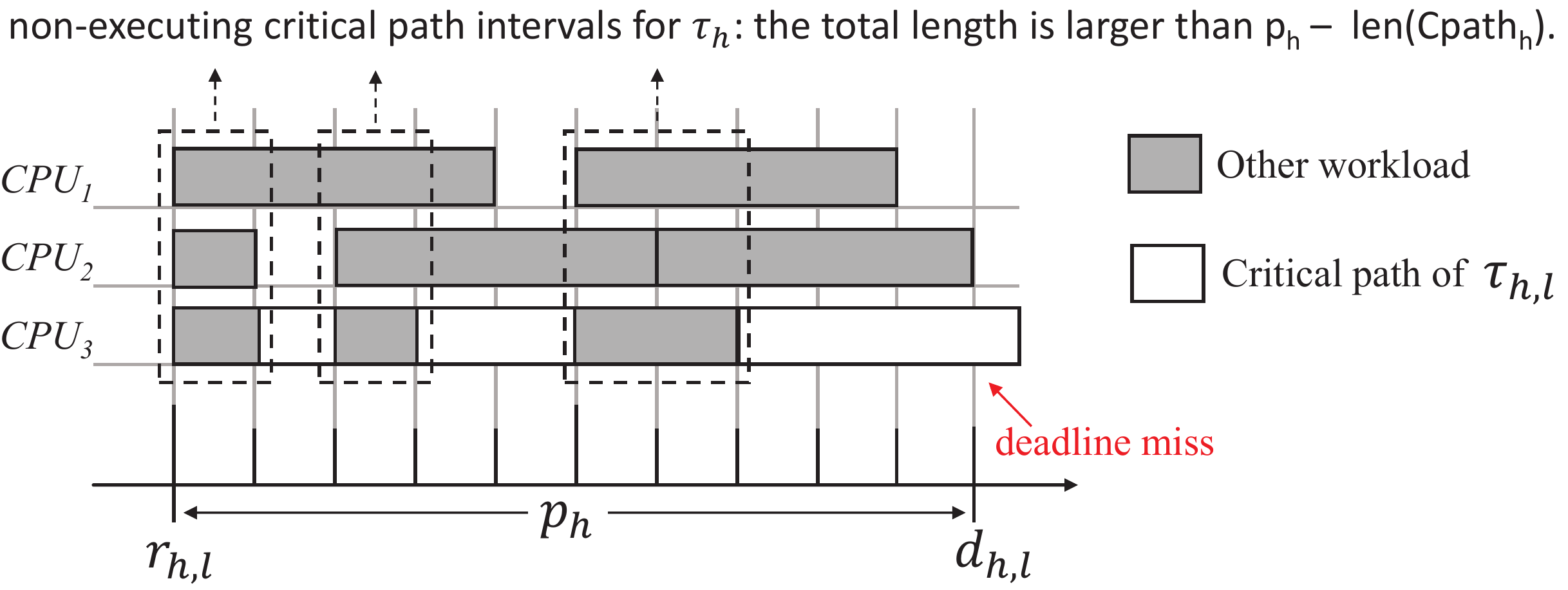}
}\caption{\footnotesize{Example illustrating Lemma~\ref{lemma:misscondition}.}}\label{fig:exampleworkload}
\end{figure}


\begin{example}
Fig.~\ref{fig:exampleworkload} shows a general example to illustrate the necessary condition given in Lemma~\ref{lemma:misscondition}. Since $\tau_{h,l}$ misses its deadline at $t_d$, the corresponding released jobs of subtasks on the critical path of $\tau_{h}$ have not completed by $t_d$ according to lemma~\ref{lemma:lastjob}. As seen in the figure, the sum of the lengths of all non-executing critical path intervals for $\tau_h$ within $[r_{h,l}, t_d)$ must exceed $p_h-\mathbf{len}(\mathbf{Cpath}_h)$.
\end{example}

\begin{lemma}\label{lemma:lowerbound}
Since $\tau_{h,l}$ misses its deadline at $t_d$,
\begin{equation}\label{eq:neccondition}
\frac{W}{p_h} > M \times (1 - \sigma_h) + \sigma_h
\end{equation}
holds where $\sigma_h = \frac{\mathbf{len}(\mathbf{Cpath}_h)}{p_h}$ and $W$ is the workload within $[r_{h,l}, t_d)$.
\end{lemma}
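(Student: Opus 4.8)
The plan is to derive the workload lower bound by marrying the necessary condition of Lemma~\ref{lemma:misscondition} with the busy-interval property of Lemma~\ref{lemma:busy}. The key idea is that within the problem window the critical path of $\tau_h$ is, at every instant, either executing (so at least one processor is busy) or \emph{not} executing (in which case, by Lemma~\ref{lemma:busy}, \emph{all} $M$ processors are busy); so the workload rate is at least $1$ on one kind of interval and exactly $M$ on the other, and Lemma~\ref{lemma:misscondition} tells us the second kind is long.

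First I would pin down the length of the problem window: since deadlines are implicit and $\tau_{h,l}$ misses its deadline at $t_d$, we have $t_d = d_{h,l} = r_{h,l} + p_h$, so $[r_{h,l}, t_d)$ has length exactly $p_h$. Next I would partition $[r_{h,l}, t_d)$ into executing and non-executing critical path intervals for $\tau_h$. This partition is legitimate because at every instant of the window the dag-job $\tau_{h,l}$ has already been released and, by Lemma~\ref{lemma:lastjob}, has not yet completed (the job of the last subtask on $\tau_h$'s critical path is still pending at $t_d$), so conditions (i)--(ii) of Def.~\ref{def:executinginterval} hold for the appropriate maximal subintervals, and each instant lies either in a maximal interval where some critical-path job executes continuously or in a maximal interval where no critical-path job executes. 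Write $L_e$ and $L_{ne}$ for the total lengths of the executing and non-executing critical path intervals; then $L_e + L_{ne} = p_h$.

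Then I would bound $W$ piecewise. On each non-executing critical path interval, Lemma~\ref{lemma:busy} says the interval is busy, so all $M$ processors run and the interval contributes $M$ times its length. On each executing critical path interval, at least the (unique ready) critical-path job of $\tau_h$ runs, so the interval contributes at least its own length. Summing gives $W \geq M\,L_{ne} + L_e = p_h + (M-1)\,L_{ne}$. Invoking Lemma~\ref{lemma:misscondition}, which gives $L_{ne} > p_h - \mathbf{len}(\mathbf{Cpath}_h)$, and using $M \geq 2$ so that $M - 1 > 0$, I obtain $W > p_h + (M-1)\bigl(p_h - \mathbf{len}(\mathbf{Cpath}_h)\bigr)$. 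Dividing by $p_h$ and substituting $\sigma_h = \mathbf{len}(\mathbf{Cpath}_h)/p_h$ yields $\frac{W}{p_h} > 1 + (M-1)(1 - \sigma_h) = M(1 - \sigma_h) + \sigma_h$, which is exactly Eq.~\ref{eq:neccondition}.

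The only delicate point — and hence the main obstacle — is making the window partition airtight: one must verify that every instant of $[r_{h,l}, t_d)$ really does satisfy the ``released before, not completed after'' conditions of Def.~\ref{def:executinginterval} (this is precisely where Lemma~\ref{lemma:lastjob} is indispensable, since dag-job completion is characterized by completion of the final critical-path job), and that there is no instant where the critical path is ``stuck'' with no ready job at all (ruled out by Lemma~\ref{lemma:readyjob}, since each critical-path job becomes ready exactly when its critical-path predecessor completes and the source is ready at $r_{h,l}$). Once the partition is established, no accounting of interfering workload from other DAG tasks is needed; the remainder is arithmetic.
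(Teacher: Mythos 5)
Your proof is correct and follows essentially the same route as the paper's: both partition the problem window into executing and non-executing critical path intervals, charge $M$ units per time on the non-executing portions via Lemma~\ref{lemma:busy} and at least $1$ unit on the executing portions, and use the fact that the executing length is strictly less than $\mathbf{len}(\mathbf{Cpath}_h)$ (equivalently, Lemma~\ref{lemma:misscondition}) to obtain the strict bound. Your extra care in justifying the partition and noting the $M\geq 2$ requirement for strictness is a welcome tightening of details the paper leaves implicit.
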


\begin{proof}
Let $\alpha$ denote the total length of all executing critical path intervals for $\tau_{h}$ within $[r_{h,l}, d_{h,l})$. Thus, by Def.~\ref{def:executinginterval}, the total length of all non-executing critical path intervals for $\tau_{h}$ within $[r_{h,l}, d_{h,l})$ is given by $p_h - \alpha$. According to lemma~\ref{lemma:busy}, all processors are busy during any non-executing critical path interval for $\tau_h$. Thus, we have
\begin{equation}
\begin{split}
W &\geq M\times(p_h - \alpha) + \alpha\\
    &> M\times p_h - (M - 1)\times\mathbf{len}(\mathbf{Cpath}_h).
\end{split}
\end{equation}

The second $>$ in the above equation holds because $\alpha < \mathbf{len}(\mathbf{Cpath}_h)$ as $\tau_{h,l}$ misses its deadline at $t_d$. If we divide both sides of the inequality by $p_h$, the lemma follows.
\end{proof}

Note that in the above lemma, $\frac{W}{p_h}$ represents the average workload within our problem window $[r_{h,l}, d_{h,l})$. Eq.~\ref{eq:neccondition} shows a necessary average workload condition for the deadline miss to happen.



\subsection{Window-based Analysis}\label{sec:window}

We now present a window-based analysis for upper-bounding the workload $W$ within the problem window, which allows us to derive a schedulability test using Lemma~\ref{lemma:lowerbound}.

\begin{figure}
\centerline
{\includegraphics[width=0.48\textwidth]{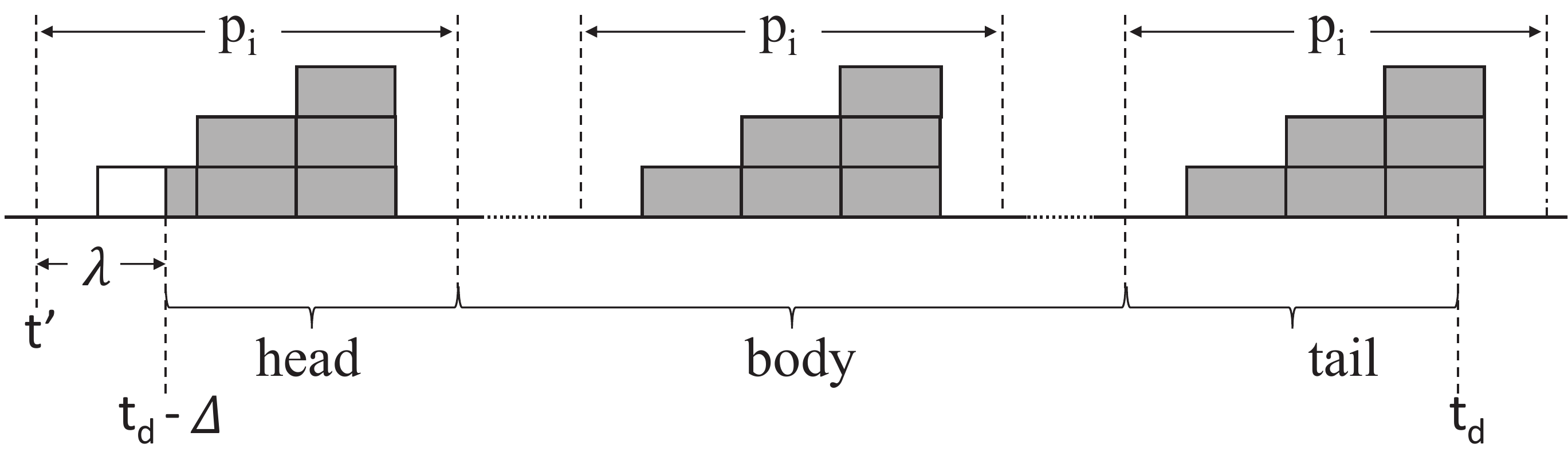}
}\caption{\footnotesize{The head, body, and tail sub-windows. }}\label{fig:windowofinterest}
\end{figure}


For any DAG $\tau_i$ that may execute in a problem window of interest, we divide this window into three sub-windows including the head, the body, and the tail sub-window. The workload contributed by $\tau_i$ in this window, denoted by $W_i$, is the sum of the workload contributed by $\tau_i$ within these three sub-windows. We seek to obtain a total upper bound on $W_i$ by seperately upper-bounding the workload contribution of $\tau_i$ in each sub-window. This concept is illustrated in Fig.~\ref{fig:windowofinterest}. The head of a window $[t_d - \Delta, t_d)$ is defined to be interval $[t_d - \Delta, t_d - \Delta+min\{\Delta, p_i - \lambda\})$, if there exists a dag-job of DAG task $\tau_i$ that is released at time $t' = t_d - \Delta - \lambda$ and $0< \lambda < p_i$. Such a dag-job is called the carry-in dag-job of $\tau_i$ w.r.t. this problem window (formally defined in Def.~\ref{def:carry-injob}). Note that if such a carry-in dag-job of $\tau_i$ does not exist, $\tau_i$ does not contribute any workload within the head sub-window.
Besides the head sub-window, the tail sub-window exists if $\tau_i$ releases a dag-job before $t_d$ which has a deadline later than $t_d$. The remaining interval within $[t_d - \Delta, t_d)$ is defined to be the body sub-window.



Based on the above definition, it is straightforward to upper bound the workload contributed by $\tau_i$ during the body sub-window through upper-bounding the number of dag-jobs released by $\tau_i$ in the body sub-window. Moreover, $\tau_i$ does not contribute any workload during the tail sub-window under CP-GEDF, since we already removed dag-jobs with deadlines later than $t_d$ from $\mathcal{S}$.
The challenge mainly lies in upper-bounding the workload contributed by $\tau_i$ within the head sub-window.
In the rest of  this section, we first derive a workload upper bound within the head sub-window in Lemmas~\ref{carryinbound} and~\ref{lemma:upperboundon}, and then derive the overall workload $W_i$ within the entire problem window in Lemma~\ref{lemma:upperbound}.

To derive a tighter upper bound on the workload with the head window, we apply the same window extension technique as first proposed in \cite{baker2003}. We extend the original problem window $[r_{h,l}, t_d)$ to find the \textit{maximal $\sigma_h$-busy window: $[t_d-\Delta, t_d)$}, where $t_d-\Delta \leq r_{h,l}$, which is defined as follows:


\begin{definition}\label{def:busy}
\textbf{($\sigma_h$-busy)}  A time interval is $\sigma_h$-busy if its average workload  is at least $M\times(1-\sigma_h) + \sigma_h$. Note that the original problem window $[r_{h,l}, t_d)$ is $\sigma_h$-busy according to lemma~\ref{lemma:lowerbound}.
\end{definition}

\begin{definition}\label{def:maximal}
\textbf{(Maximal $\sigma_h$-busy window)} The maximal $\sigma_h$-busy window, denoted by $[t_d-\Delta, t_d)$ ($t_d-\Delta \leq r_{h,l}$), is a downward extension of the $\sigma_h$-busy interval $[r_{h,l}, t_d)$, which has no longer downward extensions that are $\sigma_h$-busy. Note that such a maximal $\sigma_h$-busy window exists as at least $[r_{h,l}, t_d)$ serves as one.
\end{definition}

Intuitively, the definition of the maximal $\sigma_h$-busy window describes the workload distribution on the schedule $\mathcal{S}$: the average workload within $[t_d-\Delta, t_d)$ is at least $M\times(1-\sigma_h) + \sigma_h$; while the average workload within any interval $[t, t_d - \Delta)$, where $0 \leq t <t_d - \Delta$, must be smaller than $M\times(1-\sigma_h) + \sigma_h$. This property will be used to upper-bound the carry-in workload in lemma~\ref{lemma:upperboundon}.

\begin{lemma}\label{lemma:existence}
$\tau_h$ has a unique maximal $\sigma_h$-busy window $[t_d-\Delta, t_d)$, for $\sigma_h$.
\end{lemma}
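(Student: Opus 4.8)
The plan is to establish existence and uniqueness separately, treating the downward extension of the $\sigma_h$-busy interval $[r_{h,l}, t_d)$ as a maximization over a well-structured set of candidate left endpoints. First I would define $\mathcal{A} = \{\, t \le r_{h,l} : [t, t_d) \text{ is } \sigma_h\text{-busy}\,\}$, i.e.\ the set of all left endpoints $t$ for which the downward extension $[t, t_d)$ has average workload at least $M\times(1-\sigma_h)+\sigma_h$. By Lemma~\ref{lemma:lowerbound} and Def.~\ref{def:busy}, we have $r_{h,l} \in \mathcal{A}$, so $\mathcal{A}$ is nonempty. Since time is integral and every element of $\mathcal{A}$ lies in $[0, r_{h,l}]$ (all dag-jobs are released at or after time $0$ once jobs with deadlines after $t_d$ are removed from $\mathcal{S}$), the set $\mathcal{A}$ is a finite, nonempty subset of the integers, hence attains a minimum. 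Let $t_d - \Delta := \min \mathcal{A}$. Then $[t_d-\Delta, t_d)$ is $\sigma_h$-busy by construction, and by minimality of $t_d-\Delta$ no strictly longer downward extension $[t, t_d)$ with $t < t_d - \Delta$ is $\sigma_h$-busy; this is exactly Def.~\ref{def:maximal}, so the maximal $\sigma_h$-busy window exists.

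For uniqueness, suppose $[t_d - \Delta, t_d)$ and $[t_d - \Delta', t_d)$ are both maximal $\sigma_h$-busy windows with, say, $t_d - \Delta \le t_d - \Delta'$. Both share the right endpoint $t_d$ by the definition of a downward extension of $[r_{h,l}, t_d)$, so the only freedom is in the left endpoint, and it suffices to show $\Delta = \Delta'$. Since $[t_d - \Delta, t_d)$ is $\sigma_h$-busy, its left endpoint $t_d - \Delta$ belongs to $\mathcal{A}$, hence $t_d - \Delta \ge t_d - \Delta' \ge \min\mathcal{A} = $ the left endpoint of the window we constructed above. But if $t_d - \Delta$ were strictly larger than $\min\mathcal{A}$, then $[\min\mathcal{A}, t_d)$ would be a strictly longer $\sigma_h$-busy downward extension of $[t_d-\Delta, t_d)$, contradicting the maximality of $[t_d-\Delta, t_d)$ in Def.~\ref{def:maximal}. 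Hence $t_d - \Delta = \min\mathcal{A}$, and the same argument applied to $[t_d-\Delta', t_d)$ gives $t_d - \Delta' = \min\mathcal{A}$, so $\Delta = \Delta'$ and the window is unique.

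The only subtlety I anticipate — and the step I would write most carefully — is confirming that $\mathcal{A}$ is genuinely bounded below by a fixed value, so that the minimum exists rather than the $\sigma_h$-busy property persisting all the way down to $-\infty$. This is handled by the standing assumption that time is integral together with the fact that, after pruning dag-jobs with deadlines exceeding $t_d$, no job in $\mathcal{S}$ is released before time $0$; consequently any interval $[t, t_d)$ with $t < 0$ contains an idle prefix and cannot have average workload as large as $M\times(1-\sigma_h)+\sigma_h$ (indeed $[t,0)$ is entirely idle), so $\mathcal{A} \subseteq \{0, 1, \dots, r_{h,l}\}$ is finite. Everything else is a routine application of the definitions of $\sigma_h$-busy and maximal $\sigma_h$-busy window, so no heavy calculation is required.
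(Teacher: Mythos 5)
Your proof is correct and takes essentially the same route as the paper, which simply cites Lemma~\ref{lemma:lowerbound} and Def.~\ref{def:maximal} and treats existence and uniqueness as immediate; you fill in the implicit finite-minimum argument over the candidate left endpoints, which is a sound formalization. One small overstatement: an interval $[t,t_d)$ with $t<0$ need not fail to be $\sigma_h$-busy merely because $[t,0)$ is idle (for $t$ close to $0$ the average can still clear the threshold $M(1-\sigma_h)+\sigma_h$), but since the schedule begins at time $0$ the candidate left endpoints are in any case confined to $\{0,\dots,r_{h,l}\}$, so the set is finite and your conclusion stands.
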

\begin{proof}
By lemma~\ref{lemma:lowerbound}, the interval $[r_{h,l}, t_d)$ is $\sigma_h$-busy. Thus, by Def.~\ref{def:maximal}, $\tau_h$ has a unique maximal $\sigma_h$-busy window $[t_d-\Delta, t_d)$, where $t_d-\Delta \leq r_{h,l}$. 
\end{proof}

We will now focus on the maximal $\sigma_h$-busy window $[t_d-\Delta, t_d)$, to analyze schedulability. 

\begin{definition}\label{def:carry-injob}
\textbf{(Carry-in dag-job)} The carry-in dag-job of $\tau_i$ for the time window $[t_d-\Delta, t_d)$ is the last dag-job of task $\tau_i$ released before time instant $t_d-\Delta$.
\end{definition}


\begin{definition}
\textbf{(Carry-in workload)} The carry-in workload of $\tau_i$ at time instant $t$, denoted by $\varepsilon_i$,  is the remaining workload of the carry-in dag-job of task $\tau_i$ at time instant $t$.
\end{definition}

As shown in Fig.~\ref{fig:windowofinterest}, let $t'$ denote the release time of the carry-in dag-job of $\tau_i$, i.e., $t' = t - \lambda$, where $\lambda$ is the offset of the release time from the beginning of the window. In the following two lemmas, we upper bound the carry-in workload of $\tau_i$ in the maximal $\sigma_h$-busy window $[t_d - \Delta, t_d)$.




\begin{figure}
\centerline
{\includegraphics[width=0.48\textwidth]{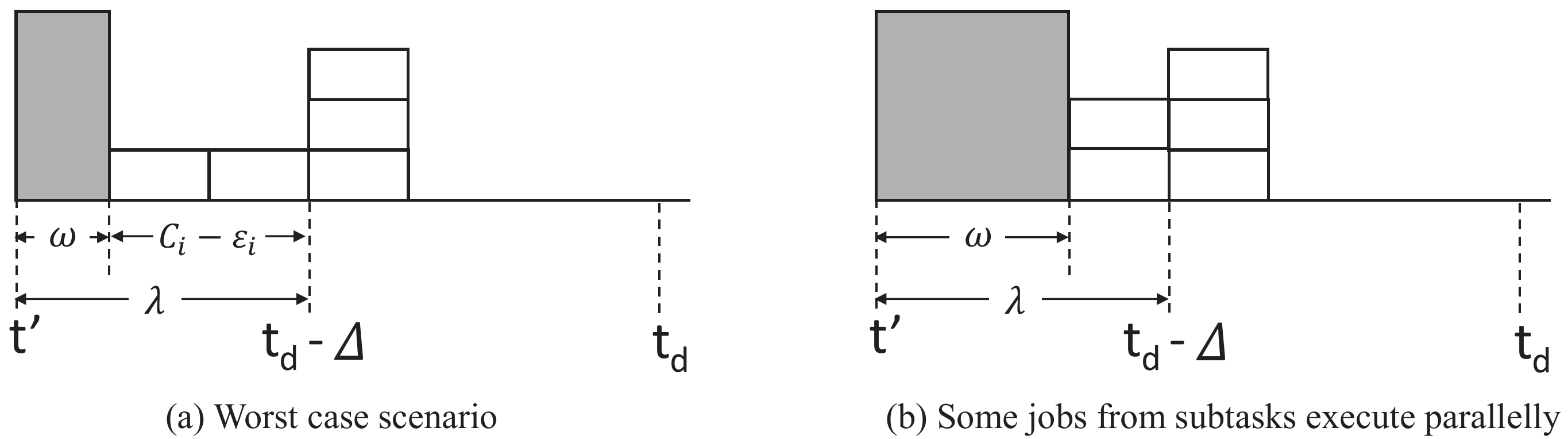}
}\caption{\footnotesize{Carry-in workload depends on the competing workload in $[t', t_d-\Delta)$. }}\label{fig:worstcasescenario}
\end{figure}

\begin{lemma}\label{carryinbound}
The average workload of interval $[t', t_d-\Delta)$ is lower bounded by $(M-1)\times\frac{(\lambda -  C_i +\varepsilon_i) }{\lambda}+ 1$.
\end{lemma}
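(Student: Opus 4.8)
### Proof Proposal for Lemma~\ref{carryinbound}

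\textbf{Proof plan.} The plan is to lower-bound the total workload $W'$ executed during $[t', t_d-\Delta)$ and then divide by the interval length $\lambda$. Throughout I take $\varepsilon_i>0$, so that the carry-in dag-job of $\tau_i$ (Def.~\ref{def:carry-injob}), which is released at $t'$ and which I denote $J^{ci}$, is genuinely incomplete at $t_d-\Delta$; otherwise the head sub-window of $\tau_i$ is empty and there is nothing to bound. Two facts drive the argument: (i) $J^{ci}$ executes for exactly $C_i-\varepsilon_i$ time units within $[t', t_d-\Delta)$, since it is released at $t'$ and still has $\varepsilon_i$ of its $C_i$ units of work left at $t_d-\Delta$; and (ii) at every instant of $[t', t_d-\Delta)$ at which no job of $J^{ci}$ is executing, all $M$ processors are busy.

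For fact (ii), first note that $J^{ci}$ is incomplete at every instant $s\in[t', t_d-\Delta)$, because it is incomplete at $t_d-\Delta$ and remaining work only decreases. Taking the first subtask of $\tau_i$ in a topological order whose job within $J^{ci}$ is still incomplete at $s$, all of that job's predecessor jobs are complete, so it is a ready job of $J^{ci}$ at $s$. Moreover, using the minimality of $t_d$ (the first deadline miss in $\mathcal{S}$), the dag-job of $\tau_i$ preceding $J^{ci}$, if any, is released no later than $t'-p_i$ and hence has deadline at most $t'<t_d$, so it is complete by $t'$; therefore $J^{ci}$ is never stalled by the execute-in-sequence constraint on $[t', t_d-\Delta)$, and the ready job identified above is actually eligible to run. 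If it is not running at $s$, then by work-conservation of CP-GEDF (the \cp policy only reorders ready jobs and never leaves a processor idle while an eligible job waits) all $M$ processors must be occupied at $s$, i.e., $s$ is a busy instant in the sense of Def.~\ref{def:busyinterval}, contributing $M$ units of workload.

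With these in hand, I would split $[t', t_d-\Delta)$ into the set $A$ of instants at which at least one job of $J^{ci}$ executes and its complement $B$, so $|A|+|B|=\lambda$. Each instant of $A$ carries at least one unit of $J^{ci}$'s own work, and by fact (i) that work totals $C_i-\varepsilon_i$, so $|A|\le C_i-\varepsilon_i$ and hence $|B|\ge \lambda-C_i+\varepsilon_i$. By fact (ii) each instant of $B$ contributes $M$ units, while the instants of $A$ together contribute at least the $C_i-\varepsilon_i$ units of $J^{ci}$. Thus $W'\ge (C_i-\varepsilon_i)+M\,|B|\ge (C_i-\varepsilon_i)+M(\lambda-C_i+\varepsilon_i)$, and dividing by $\lambda$ and rearranging yields exactly $(M-1)\frac{\lambda-C_i+\varepsilon_i}{\lambda}+1$. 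The only degenerate case, $C_i-\varepsilon_i>\lambda$, is immediate: then the claimed bound is below $1$ (its first term is negative since $M-1>0$, $\lambda>0$), whereas $W'\ge C_i-\varepsilon_i>\lambda$ already gives average workload exceeding $1$. I expect the busy-instant claim (ii) to be the main obstacle, since it relies both on the DAG always exposing a ready job while $J^{ci}$ is incomplete and on ruling out that this job is blocked by inter-dag-job sequencing, the latter requiring the first-deadline-miss minimality of $t_d$.
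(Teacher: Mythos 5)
Your proposal is correct and follows essentially the same route as the paper's proof: both rest on the observation that the carry-in dag-job of $\tau_i$ can occupy at most $C_i-\varepsilon_i$ time instants of $[t',t_d-\Delta)$, that every remaining instant must be busy on all $M$ processors (since a ready job of an incomplete dag-job that is not running forces all processors to be occupied under work-conserving CP-GEDF), and then the same arithmetic $\frac{(C_i-\varepsilon_i)+M(\lambda-C_i+\varepsilon_i)}{\lambda}=(M-1)\frac{\lambda-C_i+\varepsilon_i}{\lambda}+1$. Your write-up is in fact somewhat more careful than the paper's, which asserts the all-processors-busy property and the minimality of $\omega$ informally, whereas you justify the existence of a ready job via a topological-order argument and explicitly dispose of the degenerate case $C_i-\varepsilon_i>\lambda$.
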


\begin{proof}
This lemma seeks to prove a lower bound on the average workload of interval $[t', t_d-\Delta)$, when $\tau_i$'s carry-in workload is $\varepsilon_i$ within $[ t_d-\Delta, t_d)$.

Intuitively, there are three types of workloads within this interval: (1) the workload due to the dag-job of $\tau_i$ (2) the workload due to tasks other than $\tau_i$ which preempts $\tau_i$, and (3) the workload that executes with $\tau_i$'s dag-job in parallel. Since the first type of workload is given by $C_i - \varepsilon_i$,
the total amount of the first two types of the workload in $[t', t_d-\Delta)$ is given by $M \times \omega + C_i - \varepsilon_i$, where $\omega$ is the total length of all the intervals in $[t', t_d-\Delta)$ where all $M$ processors are busy (because $\tau_i$ gets preempted in such intervals). Since $\tau_i$ is a DAG task, its DAG structure can impact the value of $\omega$. Fig.~\ref{fig:worstcasescenario}~(a) shows the case when $\omega$ reaches its smallest value. That is, when jobs of subtasks of $\tau_i$ execute sequentially during $[t', t_d-\Delta)$, $\omega$ achieves its smallest value. Otherwise, if some jobs of subtasks of $\tau_i$ execute in parallel as shown in Fig.~\ref{fig:worstcasescenario}~(b), $\omega$ becomes larger. This is because $\lambda$ is a fixed value and $C_i - \varepsilon_i$ is the amount of workload which gets executed due to the dag-job of $\tau_i$ in $[t', t_d-\Delta)$.  Thus, the smallest $\omega$ is given by $\omega = \lambda - (C_i - \varepsilon_i)$. By ignoring the third type of workload in $[t', t_d-\Delta)$, we can lower-bound the average workload within interval $[t', t_d-\Delta)$ by:
\begin{equation}
\begin{split}
\frac{M\times \omega + (C_i - \varepsilon_i)}{\lambda} &= \frac{M\times \omega + (\lambda - \omega)}{\lambda} \\
 &= (M-1)\times \frac{\omega}{\lambda} + 1 \\
&= (M-1)\times \frac{\lambda - C_i + \varepsilon_i}{\lambda} + 1.
\end{split}
\end{equation}
\end{proof}


\begin{lemma}\label{lemma:upperboundon}
The carry-in workload $\varepsilon_i$ of $\tau_i$ within the maximal $\sigma_h$-busy window $[t_d-\Delta, t_d)$ is upper bounded  by $C_i - \sigma_h\times\lambda$.
\end{lemma}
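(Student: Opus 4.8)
The plan is to combine the lower bound on the average workload of $[t', t_d-\Delta)$ from Lemma~\ref{carryinbound} with the \emph{maximality} property of the $\sigma_h$-busy window from Def.~\ref{def:maximal}. Recall that $t' = t_d - \Delta - \lambda$ lies strictly to the left of $t_d-\Delta$, so by the maximality of $[t_d-\Delta, t_d)$ the interval $[t', t_d-\Delta)$ cannot itself be $\sigma_h$-busy; hence its average workload is strictly less than $M\times(1-\sigma_h) + \sigma_h$. (If $\lambda = 0$, i.e. the carry-in dag-job is released exactly at $t_d-\Delta$, the statement is immediate since $\varepsilon_i \leq C_i$; so I assume $\lambda > 0$.)

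First I would chain the two inequalities. From Lemma~\ref{carryinbound},
\begin{equation*}
(M-1)\times\frac{\lambda - C_i + \varepsilon_i}{\lambda} + 1 \;\leq\; \frac{W_{[t',t_d-\Delta)}}{\lambda} \;<\; M\times(1-\sigma_h) + \sigma_h,
\end{equation*}
where $W_{[t',t_d-\Delta)}$ is the workload within $[t', t_d-\Delta)$. Next I would simplify the right-hand side: $M\times(1-\sigma_h) + \sigma_h = M - (M-1)\sigma_h$, so after subtracting $1$ from both ends we get
\begin{equation*}
(M-1)\times\frac{\lambda - C_i + \varepsilon_i}{\lambda} \;<\; (M-1)\,(1 - \sigma_h).
\end{equation*}
Dividing through by $(M-1)$ (the case $M=1$ is trivial, as then there is no carry-in interference and the DAG runs alone) and multiplying by $\lambda$ yields $\lambda - C_i + \varepsilon_i < \lambda - \sigma_h\lambda$, i.e. $\varepsilon_i < C_i - \sigma_h\times\lambda$, which gives the claimed bound (with $\leq$ absorbing the strictness, or one keeps the strict inequality).

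The main obstacle — or rather, the step that needs care — is justifying that Lemma~\ref{carryinbound}'s bound applies here with the \emph{same} $\lambda$ and $\varepsilon_i$ that appear in the conclusion, and that the maximality argument is legitimately invoked: one must check that $[t', t_d-\Delta)$ is a genuine downward extension of the $\sigma_h$-busy window (it is, since $t' < t_d - \Delta \leq r_{h,l}$ and extensions are measured from the left endpoint), so Def.~\ref{def:maximal} forces its average workload below the threshold. A secondary subtlety is the treatment of the degenerate cases $\lambda=0$ and $M=1$, which should be dispatched in a sentence each so the division steps are valid. Everything else is the routine algebraic manipulation shown above.
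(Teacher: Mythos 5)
Your proposal is correct and follows essentially the same route as the paper's proof: invoke the maximality of the $\sigma_h$-busy window to bound the average workload of $[t', t_d-\Delta)$ strictly below $M\times(1-\sigma_h)+\sigma_h$, chain this with the lower bound of Lemma~\ref{carryinbound}, and solve for $\varepsilon_i$. Your explicit treatment of the degenerate cases $\lambda=0$ and $M=1$ is a small addition the paper omits, but the core argument is identical.
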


\begin{proof}
Since $[t_d-\Delta, t_d)$  is the maximal $\sigma_h$-busy window, according to Def.~\ref{def:maximal}, the average workload of $[t', t_d-\Delta)$ is smaller than $M\times(1-\sigma_h) + \sigma_h$, for otherwise $[t', t_d)$ should have been the maximal $\sigma_h$-busy window. According to lemma~\ref{carryinbound}, we have
\begin{equation}
(M-1)\times \frac{\lambda - C_i + \varepsilon_i}{\lambda} + 1 < M\times(1-\sigma_h) + \sigma_h.
\end{equation}
Thus, we have $0 \leq \varepsilon_i < C_i - \sigma_h\times\lambda$. 
\end{proof}

As discussed earlier, since any dag-job with a deadline later than $t_d$ is removed from the schedule, $\tau_i$ does not contribute any workload in the tail sub-window. Thus, the contribution due to $\tau_i$ in the entire problem window $[t_d - \Delta, t_d)$ only depends on its carry-in workload contributed by $\tau_i$ in the head sub-window and its workload within the body sub-window. The following Lemma~\ref{lemma:boundWi} gives an upper bound on $W_i$ by summing up the workload within the head sub-window and the body sub-window, following the same reasoning provided by lemma~10 in~\cite{baker2003}, which is used to upper bound the workload due to an ordinary sporadic task in a problem window. The same reasoning can be applied herein because intuitively, the workload due to a DAG task $\tau_i$ in the body sub-window depends solely on the number of dag-jobs released by $\tau_i$ within this sub-window, which is exactly the same as the ordinary sporadic task case. That is, the number of such released dag-jobs (or ordinary sporadic jobs) within the body sub-window is mainly constrained by the period $p_i$. We put the detailed proof in the appendix for completeness.


\begin{lemma}\label{lemma:boundWi}
The workload $W_i$ of $\tau_i$ on $\mathcal{S}$ during $[t_d-\Delta, t_d)$ is no greater than
\begin{equation}
    W_i=
    \begin{cases}
      \lfloor\frac{\Delta}{p_i}\rfloor\times C_i + \max\{0, C_i - \sigma_h\times\lambda\}, & \text{if}\ \Delta \geq p_i \\
      \max\{0, C_i - \sigma_h\times\lambda\}, & \text{otherwise}
    \end{cases}
\end{equation}
\noindent where $\lambda = (\lfloor\frac{\Delta}{p_i}\rfloor + 1)\times p_i - \Delta$.
\end{lemma}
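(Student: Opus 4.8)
The plan is to bound $W_i$ by summing separate upper bounds on the workload $\tau_i$ contributes in the head, body, and tail sub-windows of the problem window $[t_d-\Delta, t_d)$, following the structure of Baker's lemma~10~\cite{baker2003} but replacing its carry-in bound by the DAG-aware bound of Lemma~\ref{lemma:upperboundon}. First I would dispose of the tail: since every dag-job with deadline later than $t_d$ was removed from $\mathcal{S}$, task $\tau_i$ executes nothing in the tail sub-window, so $W_i$ equals the sum of its head-window and body-window contributions.

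Next I would bound the head-window contribution. Let $\lambda'$ be the actual release offset of $\tau_i$'s carry-in dag-job with respect to $t_d-\Delta$ (if $\tau_i$ has no carry-in dag-job, there is no head sub-window and this term is $0$). By definition the head sub-window is $[t_d-\Delta, t_d-\Delta+\min\{\Delta, p_i-\lambda'\})$, i.e.\ it ends at the deadline of the carry-in dag-job or at $t_d$, whichever is first; the next dag-job of $\tau_i$ is released only at $t_d-\Delta+(p_i-\lambda')$, which is the start of the body sub-window, and (recalling that successive dag-jobs of $\tau_i$ must execute in sequence) cannot execute before the carry-in dag-job completes. Hence the only workload of $\tau_i$ that can appear in the head sub-window is the remaining workload $\varepsilon_i$ of the carry-in dag-job, so the head-window contribution is at most $\varepsilon_i$, and by Lemma~\ref{lemma:upperboundon} together with $\varepsilon_i \ge 0$ it is at most $\max\{0, C_i - \sigma_h\lambda'\}$.

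Then I would bound the body-window contribution. The body sub-window contains only dag-jobs of $\tau_i$ that are released at or after its start and have deadline at most $t_d$, and each such dag-job contributes at most its total worst-case execution time $C_i$, regardless of the DAG's internal precedence structure. This is exactly the ordinary-sporadic-task situation: the count of such dag-jobs is governed solely by $p_i$, so the counting step of Baker's lemma~10 applies verbatim, giving at most $\max\{0, \lfloor(\Delta+\lambda')/p_i\rfloor - 1\}$ full dag-jobs of $\tau_i$ in the body sub-window. Adding the two bounds yields $W_i \le \max\{0, C_i - \sigma_h\lambda'\} + \max\{0, \lfloor(\Delta+\lambda')/p_i\rfloor - 1\}\cdot C_i$, and the last step is to observe that, over all admissible offsets $\lambda'\in(0,p_i]$, the right-hand side is maximized at $\lambda = (\lfloor\Delta/p_i\rfloor+1)p_i - \Delta$ — the offset that makes the last body dag-job's deadline coincide with $t_d$ — which is the stated expression (the body term vanishing when $\Delta < p_i$).

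The main obstacle I anticipate is this final maximization. Writing $m = \lfloor\Delta/p_i\rfloor$: at offset $\lambda$ the body term equals $mC_i$ and the head term equals $\max\{0, C_i-\sigma_h\lambda\}$; for any admissible offset above $\lambda$ the body term is still at most $mC_i$ while the head term only shrinks, and for any offset below $\lambda$ the body count drops to at most $m-1$ while the head term is at most $C_i$ (its value at offset $0$), so the total never exceeds $mC_i \le \max\{0, C_i-\sigma_h\lambda\} + mC_i$. The rest is routine but slightly fiddly floor-arithmetic for pinning down $\lambda$ and the job count, plus checking the boundary cases $\Delta < p_i$ and $\Delta$ an exact multiple of $p_i$; these I would relegate to the appendix as the paper promises.
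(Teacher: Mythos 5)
Your proposal is correct and follows essentially the same route as the paper's appendix proof: tail contribution is zero because jobs with deadlines past $t_d$ are removed, the head contribution is bounded by the carry-in bound of Lemma~\ref{lemma:upperboundon}, the body contribution is bounded by counting periodically released dag-jobs exactly as in Baker's lemma~10, and the total is maximized at the offset $\lambda = (\lfloor\Delta/p_i\rfloor+1)p_i-\Delta$. Your explicit parameterization by the actual offset $\lambda'$ and the case split around $\lambda$ is in fact a somewhat cleaner rendering of the paper's informal ``linear increase versus losing one body job'' argument, but it is the same decomposition and the same key lemma.
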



The following lemma is to upper-bound the average workload $\frac{W_i}{\Delta}$ due to $\tau_i$ within $[t_d-\Delta, t_d)$. Similar to Lemma~\ref{lemma:boundWi}, this process is essentially the same for the DAG and the oridinary sporadic task models.

\begin{lemma}\label{lemma:upperbound}
For the maximal $\sigma_h$-busy window $[t_d-\Delta, t_d)$, the average workload $\frac{W_i}{\Delta}$ due to $\tau_i$ is at most $\eta_i$, where
\begin{equation}\label{eq:eta}
    \eta_i=
    \begin{cases}
      u_i, & \text{if}\ \sigma_h \geq  u_i\\
      u_i + \frac{C_i - \sigma_h \times p_i}{p_h}, & \text{if}\ \sigma_h <  u_i.
    \end{cases}
\end{equation}
\end{lemma}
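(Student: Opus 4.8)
The plan is to combine the workload bound of Lemma~\ref{lemma:boundWi} with an optimization over the window length $\Delta$, using the single additional fact that $\Delta \ge p_h$. The latter holds because the maximal $\sigma_h$-busy window $[t_d-\Delta, t_d)$ contains the problem window $[r_{h,l}, t_d)$, whose length is exactly $p_h$ (recall $t_d = d_{h,l} = r_{h,l}+p_h$). Writing $q = \lfloor \Delta/p_i\rfloor$ and $\lambda = (q+1)p_i - \Delta$, so that $\lambda \in (0, p_i]$ and $\Delta = (q+1)p_i - \lambda \ge \max\{qp_i,\, p_h\}$, Lemma~\ref{lemma:boundWi} gives uniformly
\[
\frac{W_i}{\Delta} \;\le\; \frac{qC_i + \max\{0,\; C_i - \sigma_h\lambda\}}{(q+1)p_i - \lambda}.
\]
I would then bound the right-hand side separately according to which term attains the maximum, for a fixed $q$.

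If the maximum is $0$, then $C_i \le \sigma_h\lambda \le \sigma_h p_i$, which is only possible when $\sigma_h \ge u_i$; and then $W_i \le qC_i$, so $W_i/\Delta \le qC_i/\Delta \le qC_i/(qp_i) = u_i$ by $\Delta \ge qp_i$ (trivially $W_i/\Delta = 0 \le u_i$ if $q=0$). If instead the maximum is $C_i - \sigma_h\lambda \ge 0$, the ratio equals $((q+1)C_i - \sigma_h\lambda)/((q+1)p_i - \lambda)$, and a short computation shows that the sign of its derivative in $\lambda$ equals the sign of $C_i - \sigma_h p_i = (u_i - \sigma_h)p_i$ — this is precisely what produces the two cases of the statement. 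When $\sigma_h \ge u_i$ the ratio is non-increasing in $\lambda$, hence bounded by its value as $\lambda \to 0^+$, namely $u_i$; combined with the previous case this gives $W_i/\Delta \le u_i = \eta_i$. When $\sigma_h < u_i$ we have $C_i - \sigma_h\lambda \ge C_i - \sigma_h p_i > 0$ for every $\lambda \in (0, p_i]$, so this second case always applies and the ratio is non-decreasing in $\lambda$; it is therefore maximized at $\lambda = \min\{p_i,\,(q+1)p_i - p_h\}$. If $qp_i \ge p_h$ the maximizer is $\lambda = p_i$, giving $u_i + (C_i - \sigma_h p_i)/(qp_i) \le u_i + (C_i - \sigma_h p_i)/p_h = \eta_i$; if $qp_i < p_h$ the maximizer is $\lambda = (q+1)p_i - p_h$, giving $\frac{(q+1)(C_i - \sigma_h p_i)}{p_h} + \sigma_h$, which after substituting $C_i = u_i p_i$ and using $(q+1)p_i < p_h + p_i$ is again at most $u_i + (C_i - \sigma_h p_i)/p_h = \eta_i$.

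Assembling these cases over all $q$ and all admissible $\lambda$ yields $W_i/\Delta \le \eta_i$, which is the claim. The main obstacle I anticipate is bookkeeping rather than conceptual difficulty: one must carry the integer $q$ (the number of body dag-jobs of $\tau_i$) and the real cap $\lambda \le (q+1)p_i - p_h$ jointly, so that the ``straddling'' block with $qp_i < p_h \le (q+1)p_i$ — including the degenerate case $q=0$ when $p_h < p_i$ — is handled correctly, and so that the bound is established uniformly over all $\Delta$ rather than only at the breakpoints $\Delta \in \{qp_i\}$ where the worst case is actually attained. It is also worth noting where $\Delta \ge p_h$ enters: only in the subcase $\sigma_h < u_i$; when $\sigma_h \ge u_i$ the density-style bound $W_i/\Delta \le u_i$ holds for every $\Delta$, consistent with the ordinary sporadic-task analysis.
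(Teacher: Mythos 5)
Your proof is correct and follows essentially the same route as the paper's: both start from the bound of Lemma~\ref{lemma:boundWi}, split on the sign of $C_i - \sigma_h p_i$ (equivalently $u_i$ versus $\sigma_h$), and invoke $\Delta \ge p_h$ only in the subcase $\sigma_h < u_i$. The only difference is presentational --- you optimize explicitly over the pair $(q,\lambda)$ via monotonicity in $\lambda$, whereas the paper substitutes $\lfloor\Delta/p_i\rfloor$ by $\Delta/p_i$ or $\Delta/p_i - 1$ according to the sign of its coefficient; your handling of the straddling case $qp_i < p_h \le (q+1)p_i$ is in fact slightly more careful than the paper's.
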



\subsection{A Utilization-based Schedulability Test}\label{sec:test}


We now use Lemma~\ref{lemma:upperbound}  combined with Lemma~\ref{lemma:lowerbound}, which is the necessary condition for deadline misses given in Sec.~\ref{sec:necessarycondition}, to derive a utilization-based schedulability test.


\begin{theorem}\label{Theorem:test}
A set $\tau = \{\tau_1, \dots, \tau_n\}$ of $n$ independent sporadic DAG tasks is schedulable on $M$ identical processors under CP-GEDF, if, for every task $\tau_k$,
\begin{equation}\label{eq:finaltest}
\sum_{i=1}^n \eta_i \leq M - (M - 1) \times\sigma_k.
\end{equation}
\noindent where $\eta_i$ is defined in Eq.~\ref{eq:eta} and $\sigma_k$ is defined in Def.~\ref{definition:cputilization}.
\end{theorem}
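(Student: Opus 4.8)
The plan is to prove the theorem by contradiction, assembling the machinery already built up in Sections~\ref{sec:keyproperty}--\ref{sec:window}. Suppose that $\tau$ satisfies Eq.~\ref{eq:finaltest} for every task but is nonetheless \emph{not} schedulable under CP-GEDF on $M$ processors. Then, as set up in Sec.~\ref{sec:necessarycondition}, let $t_d$ be the first deadline miss, let $\tau_{h,l}$ be the problem dag-job, and let $\tau_h$ be the problem task; after discarding all dag-jobs with deadlines later than $t_d$, we work in the resulting schedule $\mathcal{S}$. Applying Lemma~\ref{lemma:lowerbound} to $\tau_{h,l}$ gives the necessary condition $W/p_h > M(1-\sigma_h)+\sigma_h = M-(M-1)\sigma_h$ for the workload $W$ in the problem window $[r_{h,l},t_d)$; equivalently, $[r_{h,l},t_d)$ is $\sigma_h$-busy (Def.~\ref{def:busy}), and in fact strictly so.

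Next I would extend the problem window to the maximal $\sigma_h$-busy window $[t_d-\Delta,t_d)$ furnished by Lemma~\ref{lemma:existence}, and re-use the symbol $W$ for the workload inside this larger window. By Defs.~\ref{def:busy} and~\ref{def:maximal}, $W/\Delta \ge M-(M-1)\sigma_h$, and carrying the strict inequality of Lemma~\ref{lemma:lowerbound} through the downward extension yields $W/\Delta > M-(M-1)\sigma_h$. I would then write $W=\sum_{i=1}^{n}W_i$, where $W_i$ is the workload contributed by $\tau_i$ in $[t_d-\Delta,t_d)$ --- this accounts for \emph{every} task that can execute in the window, including the problem task $\tau_h$ itself --- and bound each term by Lemma~\ref{lemma:upperbound}: $W_i \le \Delta\cdot\eta_i$ with $\eta_i$ as in Eq.~\ref{eq:eta}. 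Summing and dividing by $\Delta$ gives $M-(M-1)\sigma_h < W/\Delta = \sum_{i=1}^n W_i/\Delta \le \sum_{i=1}^n \eta_i$, so $\sum_{i=1}^n\eta_i > M-(M-1)\sigma_h$. This contradicts Eq.~\ref{eq:finaltest} instantiated at $\tau_k=\tau_h$, so no deadline is ever missed and $\tau$ is schedulable.

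I expect the proof of the theorem itself to be short, because the substantive work is already encapsulated in the preceding results: the \cp-policy structural properties (Lemmas~\ref{lemma:readyjob} and~\ref{lemma:lastjob}), the busy-interval property (Lemma~\ref{lemma:busy}), the deadline-miss necessary condition (Lemma~\ref{lemma:lowerbound}), the carry-in bounds (Lemmas~\ref{carryinbound} and~\ref{lemma:upperboundon}), and the per-task average-workload bound (Lemma~\ref{lemma:upperbound}). The one place that needs genuine care is the strict-versus-non-strict bookkeeping: Lemma~\ref{lemma:lowerbound} is strict, and one must verify that this strictness is preserved when passing from the problem window $[r_{h,l},t_d)$ to the maximal $\sigma_h$-busy window $[t_d-\Delta,t_d)$, since only a strict bound $\sum\eta_i > M-(M-1)\sigma_h$ actually contradicts the non-strict hypothesis of Eq.~\ref{eq:finaltest}. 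A secondary, purely bookkeeping obligation is to confirm that the workload decomposition $W=\sum_i W_i$ and the applicability of Lemma~\ref{lemma:upperbound} indeed cover all tasks active in the window, the problem task $\tau_h$ included.
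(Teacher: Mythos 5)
Your proposal is correct and follows essentially the same route as the paper's own proof: contradiction via the first deadline miss, extension to the maximal $\sigma_h$-busy window of Lemma~\ref{lemma:existence}, the strict lower bound on $W/\Delta$ from Def.~\ref{def:maximal}, and the per-task bound $W_i/\Delta \le \eta_i$ from Lemma~\ref{lemma:upperbound}. The strictness bookkeeping you flag is indeed the one delicate point, and the paper simply asserts the strict inequality for the extended window without further comment.
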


\begin{proof}
We prove this lemma by contradiction. Suppose some deadline misses occur in the CP-GEDF schedule.
Let $\tau_h$ be the first task to miss a deadline at $t_d$ and $[t_d - \Delta, t_d)$ be the maximal $\sigma_h$-busy window with respect to $\tau_h$. The existence of $[t_d - \Delta, t_d)$ is guaranteed by Lemma~\ref{lemma:lowerbound} and Lemma~\ref{lemma:existence}. Since $[t_d - \Delta, t_d)$ is $\sigma_h$-busy, by Def.~\ref{def:maximal}, we have $\frac{W}{\Delta} > M\times(1-\sigma_h) + \sigma_h$. By lemma~\ref{lemma:upperbound}, we have $\frac{W_i}{\Delta}\leq \eta_i$, for $1\leq i \leq n$.
Since the workload of $\tau_i$ during $[t_d-\Delta, t_d)$ is upper bounded by $W_i$,  $\sum_{i=1}^n W_i \geq W$ by the definition of $W$ given in Def.~\ref{def:workload}. Thus, we have
\begin{equation}
\sum_{i=1}^n \eta_i \geq \sum_{i=1}^n \frac{W_i}{\Delta} \geq \frac{W}{\Delta} > M\times(1 - \sigma_h) + \sigma_h.
\end{equation}
 This clearly contradicts Eq.~\ref{eq:finaltest}.
\end{proof}

The above schedulability test in Theorem~\ref{Theorem:test} must be checked individually for each task, thus having a time complexity of $\mathcal{O}(n)$. This test can be viewed as the DAG version of the test designed for the ordinary sporadic task model given in~\cite{baker2003}, if  replacing $\sigma_k$ by $\tau_k$'s utilization.


%
%
%

\begin{figure*}
\centering
{\includegraphics[width=0.6\textwidth]{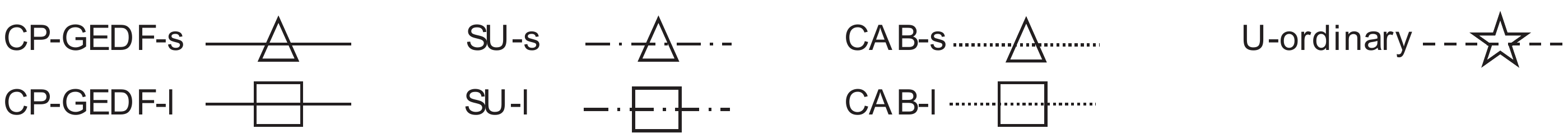}}\vspace{-1pt}\\
\subfigure[M=8, light per-task utilization.]{\label{fig:exp1}
\includegraphics[width=0.66\columnwidth]{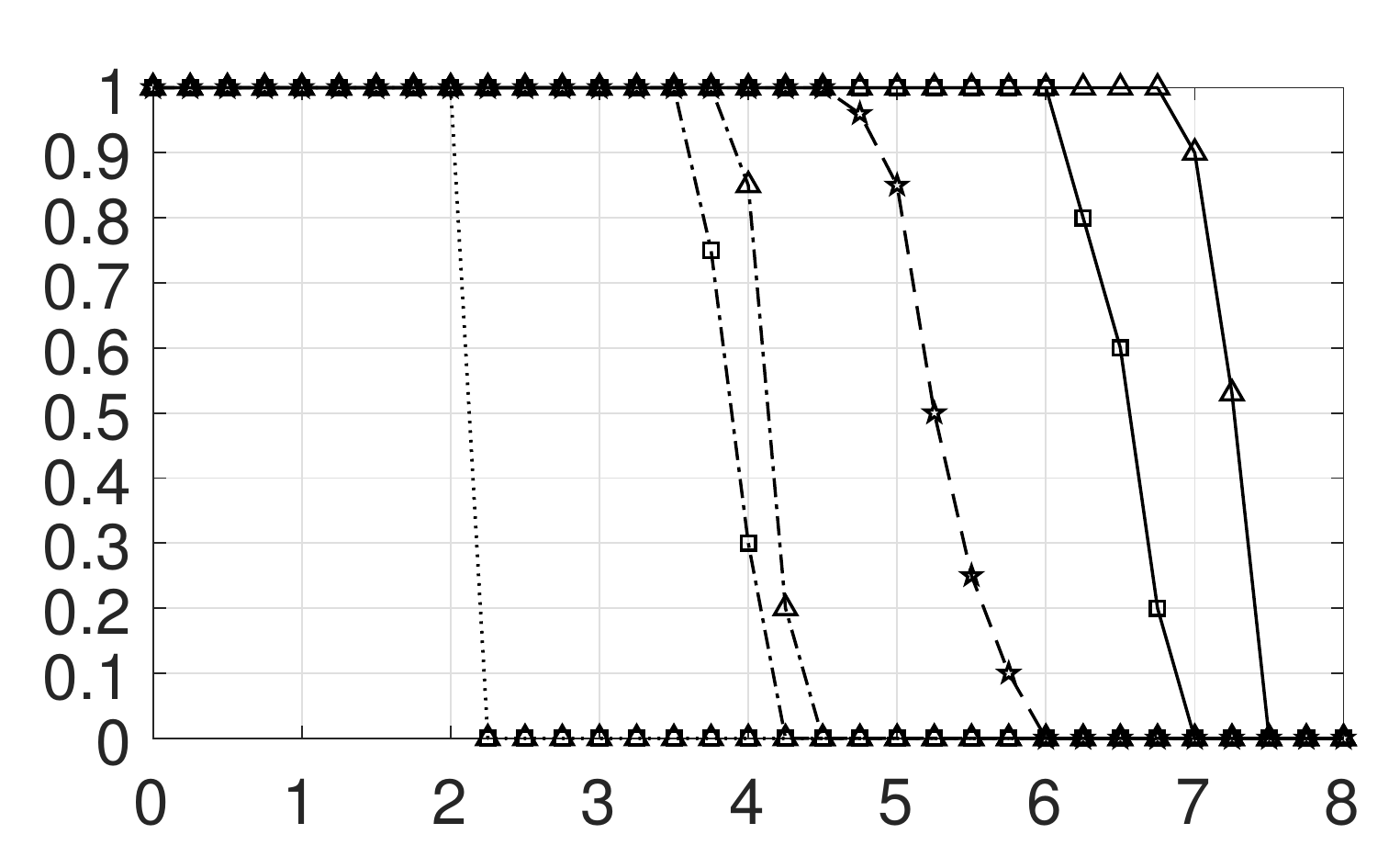}}
\subfigure[M=8, medium per-task utilization.]{\label{fig:exp3}
\includegraphics[width=0.66\columnwidth]{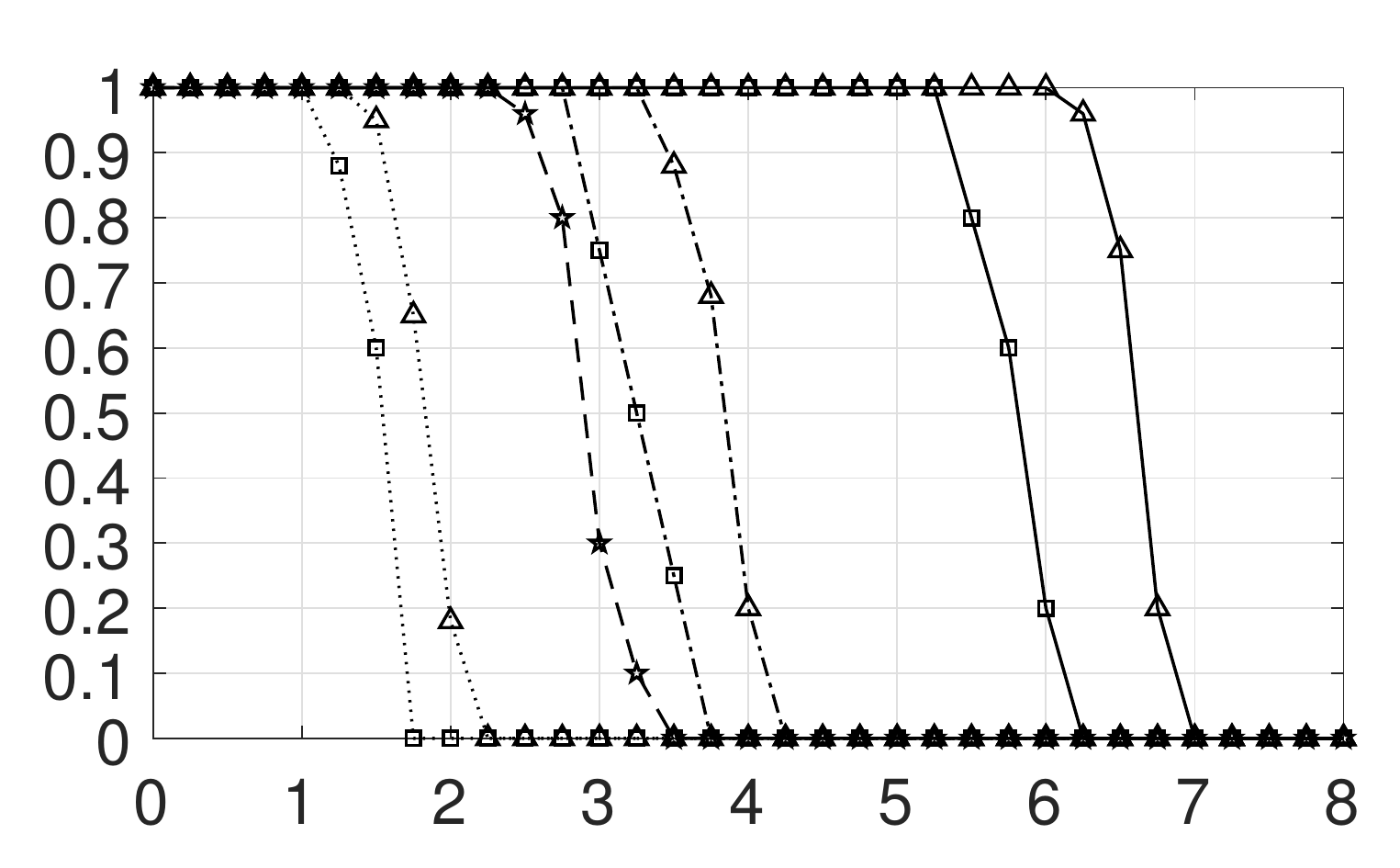}}
\subfigure[M=8, heavy per-task utilization.]{\label{fig:exp5}
\includegraphics[width=0.66\columnwidth]{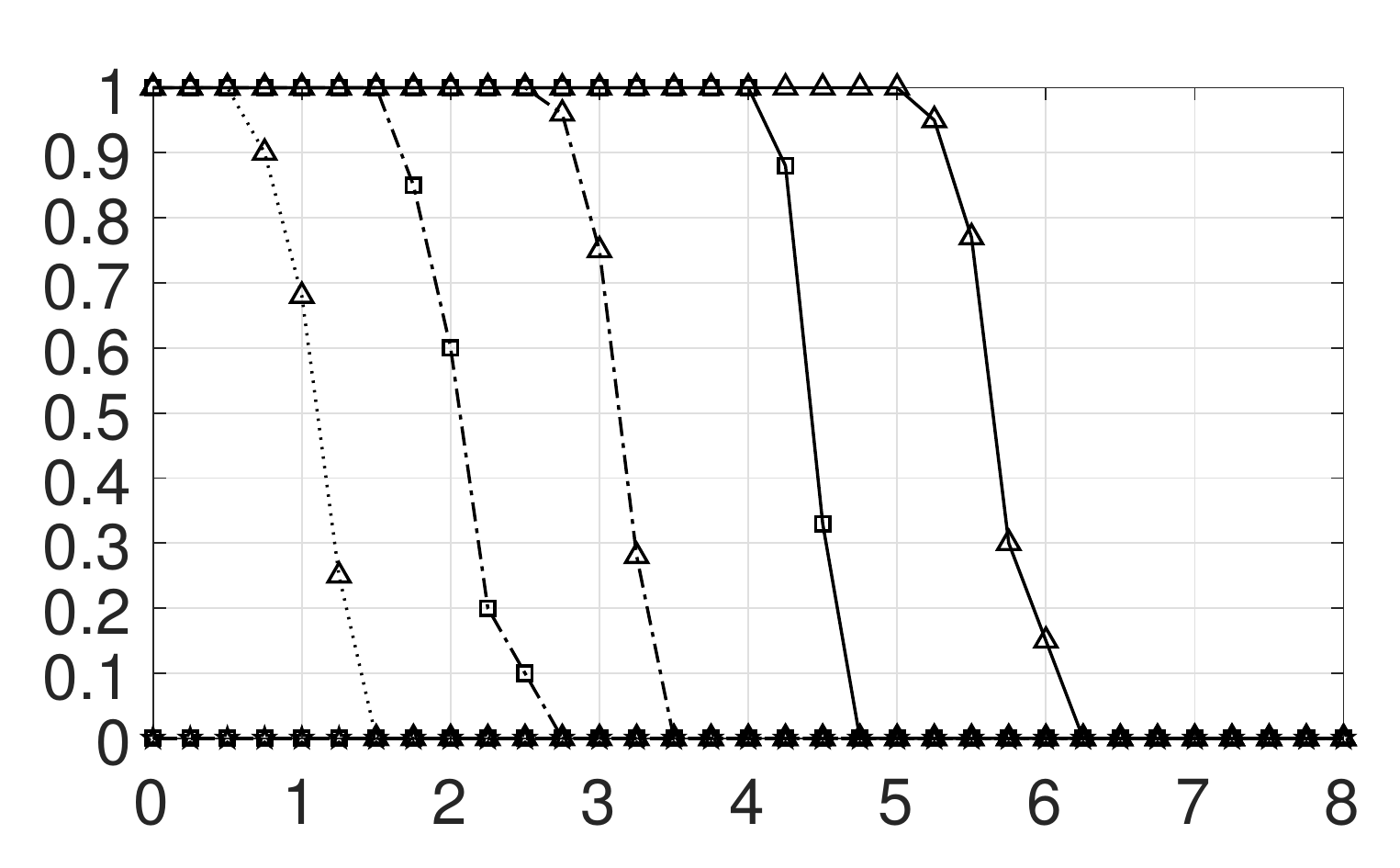}}
\subfigure[M=16, light per-task utilization.]{\label{fig:exp2}
\includegraphics[width=0.66\columnwidth]{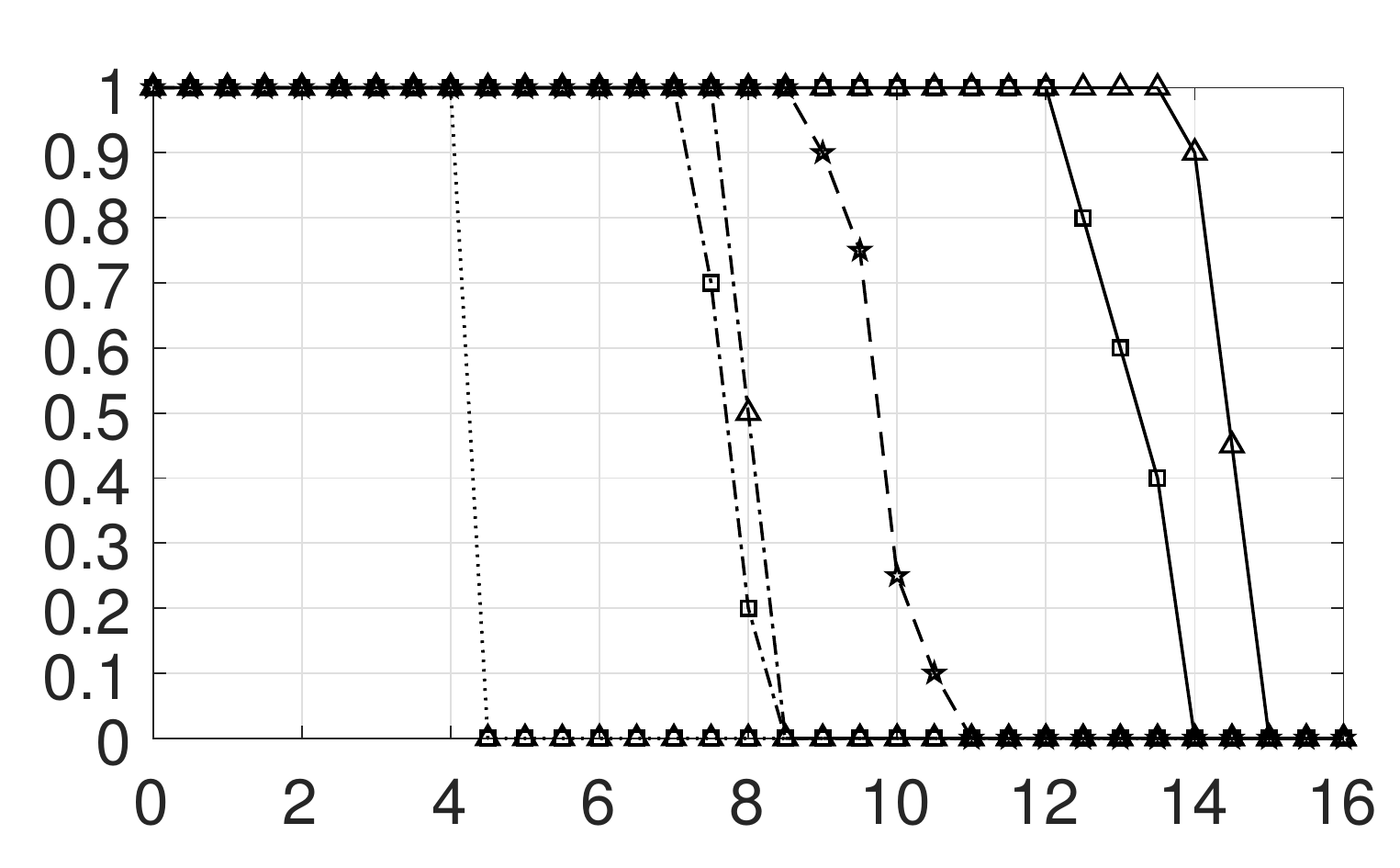}}
\subfigure[M=16, medium per-task utilization.]{\label{fig:exp4}
\includegraphics[width=0.66\columnwidth]{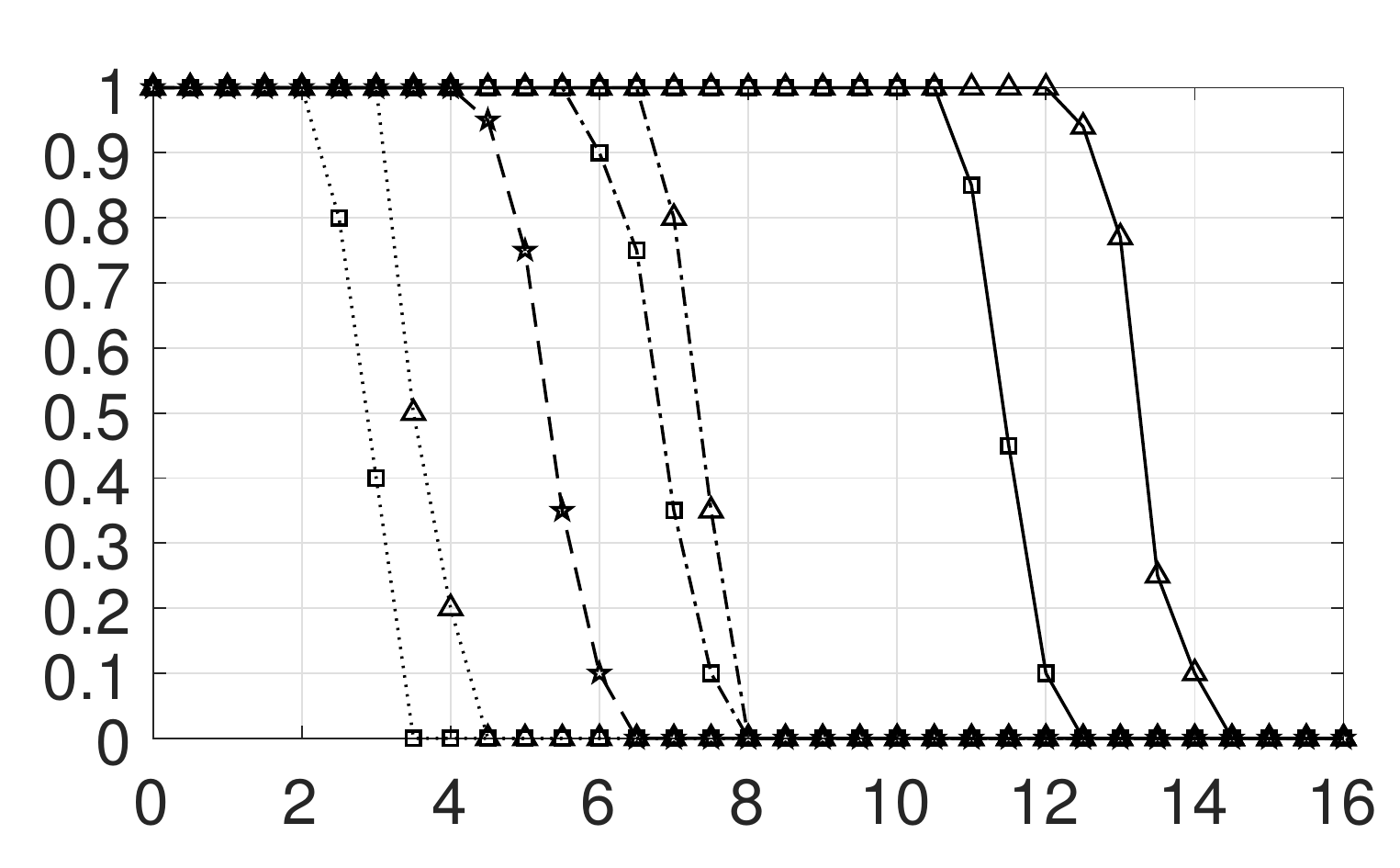}}
\subfigure[M=16, heavy per-task utilization.]{\label{fig:exp6}
\includegraphics[width=0.66\columnwidth]{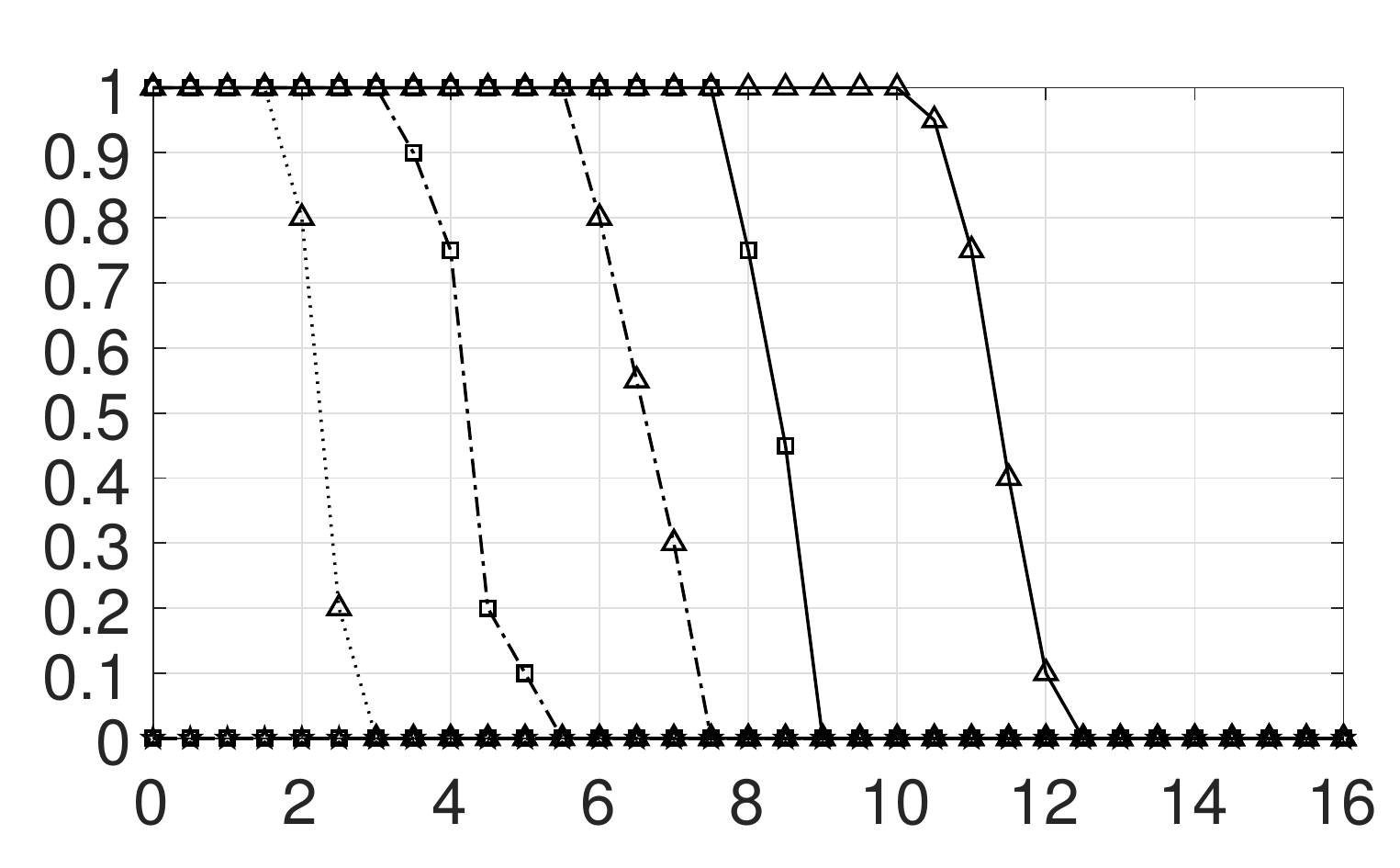}}
\caption{\footnotesize{Schedulability results. In all graphs, the x-axis represents the task set utilization cap and the y-axis represents the fraction of generated task sets that were schedulable. In the first (respectively, second) rows of graphs, M = 8 (respectively, M = 16) is assumed. In the first (respectively, second and third) column of graphs, light (respectively, medium and heavy) pertask utilizations are assumed. Each graph gives seven curves: two curves per tested approach for DAG task sets for the cases of short and long critical path, respectively, and one additional curve of the density test for ordinary sporadic task sets. As seen at the top of the figure, the label ``CP-GEDF-s(l)'' indicates the approach of our schedulability test given in Theorem~\ref{Theorem:test} assuming short (long) critical path. Similarly, ``SU-s(l)'' and ``CAB-s(l)'' labels denote the utilization-based schedulability tests given in~\cite{chen2014capacity} and~\cite{li2013outstanding} respectively.}}\label{exp:all}
\end{figure*}

\vspace{1mm}
\noindent\textbf{Relationship to the classical density test~\cite{goossens2003priority}.} If each DAG has only one single subtask, then each sporadic DAG tasks becomes an ordinary sporadic task. In this case, we have $\sigma_i = u_i$ $(1\leq i \leq n)$ and $\sigma_{max}(\tau) = u_{max}(\tau)$, where $u_{max}(\tau) = \max_{\tau_i\in\tau}u_i$. The schedulability test given in Theorem~\ref{Theorem:test} becomes identical to the classical density test.


\begin{lemma}
A set $\tau = \{\tau_1, \dots, \tau_n\}$ of $n$ independent sporadic DAG tasks is schedulable on $M$ identical processors under CP-GEDF, if each DAG task only contains a single subtask and
\begin{equation}\label{eq:dencity}
U_{sum} \leq M - (M-1)\times u_{max}(\tau)
\end{equation}
holds.
\end{lemma}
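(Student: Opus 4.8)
The plan is to specialize Theorem~\ref{Theorem:test} to the degenerate case where each DAG $\tau_i$ consists of a single subtask, and to show that under this specialization the per-task condition in Eq.~\ref{eq:finaltest} collapses exactly to the density condition in Eq.~\ref{eq:dencity}. First I would observe that when $|V_i| = 1$, the critical path $\mathbf{Cpath}_i$ is just that single subtask, so $\mathbf{len}(\mathbf{Cpath}_i) = C_i$ and hence $\sigma_i = C_i / p_i = u_i$ for every $i$; in particular $\sigma_{max}(\tau) = u_{max}(\tau)$. This also means the precedence constraints are vacuous, so CP-GEDF reduces to plain GEDF on an ordinary implicit-deadline sporadic task system, though we do not even need that observation — we only need to feed the simplified parameters into Theorem~\ref{Theorem:test}.

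Next I would simplify $\eta_i$ from Eq.~\ref{eq:eta}. Since $\sigma_h = u_h$ and $\sigma_i = u_i$, the case split in Eq.~\ref{eq:eta} is between $u_h \geq u_i$ (giving $\eta_i = u_i$) and $u_h < u_i$ (giving $\eta_i = u_i + \frac{C_i - u_h p_i}{p_h} = u_i + \frac{(u_i - u_h)p_i}{p_h}$). I would then take $k$ in Theorem~\ref{Theorem:test} to be an index achieving $u_k = u_{max}(\tau)$, so that $u_k \geq u_i$ for all $i$; with this choice every task falls into the first branch and $\eta_i = u_i$ for all $i$. Summing gives $\sum_{i=1}^n \eta_i = \sum_{i=1}^n u_i = U_{sum}$, and the right-hand side of Eq.~\ref{eq:finaltest} becomes $M - (M-1)\sigma_k = M - (M-1)u_{max}(\tau)$. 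Hence Eq.~\ref{eq:finaltest} for this particular $k$ is exactly Eq.~\ref{eq:dencity}.

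Finally I would close the argument by noting that Theorem~\ref{Theorem:test} requires Eq.~\ref{eq:finaltest} to hold for \emph{every} task $\tau_k$, so in principle one must verify it for all $k$, not just the maximizer. But this is immediate: for the branch with $u_k < u_i$ we have $\eta_i = u_i + \frac{(u_i - u_k)p_i}{p_h}$, which is at least $u_i$, so $\sum_i \eta_i \geq U_{sum}$ for general $k$; combined with the right-hand side $M - (M-1)u_k \geq M - (M-1)u_{max}(\tau)$ (since $u_k \leq u_{max}(\tau)$), the condition for a general $k$ is implied once it holds for the maximizer. Thus if Eq.~\ref{eq:dencity} holds then Eq.~\ref{eq:finaltest} holds for the maximizing $k$, hence the hypothesis ``for every task $\tau_k$'' in Theorem~\ref{Theorem:test} is — after this reduction — equivalent to checking the single strongest instance, and schedulability follows.

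The main obstacle, such as it is, is purely bookkeeping: being careful that Theorem~\ref{Theorem:test} is quantified over all $k$ and arguing that Eq.~\ref{eq:dencity} (which has no free $k$) is the tightest of these instances, rather than sloppily replacing ``for every $\tau_k$'' by ``for the maximal $\tau_k$.'' Once the identity $\sigma_i = u_i$ is in hand, there is no real analytical content — the lemma is a corollary obtained by substitution, so I would keep the proof to a few lines invoking Theorem~\ref{Theorem:test} with $\sigma_i$ replaced by $u_i$ throughout.
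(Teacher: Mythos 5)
Your reduction breaks down at the last step, and the gap is not just bookkeeping. Theorem~\ref{Theorem:test} requires Eq.~\ref{eq:finaltest} to hold for \emph{every} $\tau_k$, with $\eta_i$ computed from Eq.~\ref{eq:eta} using the parameters $\sigma_k, p_k$ of that $\tau_k$. You correctly note that for a non-maximal $k$ both sides of Eq.~\ref{eq:finaltest} grow relative to the maximizer instance, but you then conclude that the maximizer instance is therefore the tightest one; that implication is backwards. When $u_k < u_i$ the penalty term $\frac{(u_i-u_k)p_i}{p_k}$ can be arbitrarily large compared to the slack $(M-1)(u_{max}(\tau)-u_k)$ gained on the right-hand side. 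Concretely, take $M=2$ and two single-subtask tasks with $u_1=0.5$, $p_1=100$ and $u_2=0.4$, $p_2=1$: then $U_{sum}=0.9 \leq M-(M-1)u_{max}(\tau)=1.5$, so Eq.~\ref{eq:dencity} holds, yet for $k=2$ one gets $\eta_1 = 0.5 + \frac{50-0.4\cdot 100}{1} = 10.5$ and $\sum_i \eta_i = 10.9 > 1.6 = M-(M-1)u_2$, so Eq.~\ref{eq:finaltest} fails for that $k$. Hence the lemma is \emph{not} obtainable from Theorem~\ref{Theorem:test} by substitution; the hypothesis of the theorem is genuinely stronger than Eq.~\ref{eq:dencity} even in the single-subtask case.

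The paper avoids this by not invoking Theorem~\ref{Theorem:test} as a black box. It reruns the busy-window argument with the threshold $u_{max}(\tau)$ in place of $\sigma_h$: since $M-(M-1)u_h \geq M-(M-1)u_{max}(\tau)$, the problem window of the first task $\tau_h$ to miss a deadline is also $u_{max}(\tau)$-busy, so a maximal $u_{max}(\tau)$-busy window $[t_d-\Delta',t_d)$ exists (Def.~\ref{def:maximal}), and Lemma~\ref{lemma:upperbound} applied with this threshold puts every task in the first branch of Eq.~\ref{eq:eta} (because $u_{max}(\tau)\geq u_i$ for all $i$), giving $\eta_i=u_i$ and thus $U_{sum} > M-(M-1)u_{max}(\tau)$, the desired contradiction. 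Your observations that $\sigma_i=u_i$ and that the maximizer instance of Eq.~\ref{eq:finaltest} coincides with Eq.~\ref{eq:dencity} are correct and are the first half of that argument, but you need the re-parametrized busy-window step to dispose of the non-maximal $k$'s rather than the (false) monotonicity claim.
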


\begin{proof}
We prove this lemma by contradiction. Suppose there is a deadline miss in the schedule. We will show that this leads to a contradiction of Eq~\ref{eq:dencity}. Since each DAG task only contains a single subtask, $\sigma_i = u_i$ $(1\leq i \leq n)$ holds.


Let $\tau_h$ be the first task to miss a deadline at $t_d$, and $[t_d - \Delta, t_d)$ be the maximal $u_h$-busy window with respect to $\tau_h$, which is guaranteed by Lemma~\ref{lemma:existence} since there is a deadline miss. Since $u_h \leq u_{max}(\tau)$, we have
\begin{equation}\label{eq:umax}
M - (M-1)\times u_h \geq M - (M-1)\times u_{max}(\tau).
\end{equation}
Because $[t_d - \Delta, t_d)$ is $u_h$-busy, according to Def.~\ref{def:busy}, Def.~\ref{def:maximal} and Eq.~\ref{eq:umax}, the maximal $u_{max}(\tau)$-busy interval $[t_d - \Delta', t_d)$ on $\mathcal{S}$ exists, where $\Delta' \geq \Delta$. By Def.~\ref{def:maximal}, we have $\frac{W}{\Delta'} > M\times(1-u_{max}(\tau)) + u_{max}(\tau)$, where $W$ is the workload within $[t_d - \Delta', t_d)$. By lemma~\ref{lemma:upperbound}, $\frac{W_i}{\Delta'}\leq \eta_i$ holds w.r.t. the maximal $u_{max}(\tau)$-busy interval $[t_d - \Delta', t_d)$, for $1\leq i \leq n$. We thus have
\begin{equation}\label{eq:finaleq}
M - (M - 1)\times u_{max}(\tau) < \frac{W}{\Delta'} \leq \sum_{i=1}^n \frac{W_i}{\Delta'}\leq \sum_{i=1}^n \eta_i.
\end{equation}

Since $u_{max}(\tau) \geq u_i$ holds for every task $\tau_i$, in Eq.~\ref{eq:finaleq} $\eta_i = u_i$ by Lemma~\ref{lemma:upperbound}. We have
\begin{equation}
M - (M - 1)\times u_{max}(\tau) <  \sum_{i=1}^n \eta_i = U_{sum},
\end{equation}
which contradicts Eq.~\ref{eq:dencity}.
\end{proof}

\section{Experiments}

We have conducted extensive sets of experiments using randomly-generated DAG task sets to evaluate the applicability of Theorem~\ref{Theorem:test}. We compare our test denoted ``CP-GEDF'' with the only two existing utilization-based schedulability tests: one denoted by ``CAB'' given by Corollary~6 in \cite{li2013outstanding}, and the second one denoted by ``SU'' given by Corollary~2 in \cite{chen2014capacity}.

\vspace{1mm}
\noindent\textbf{Experimental setup:} In our experiments, each DAG task set was generated randomly as follows, which is similar to the task set generation methods used in~\cite{liu2012m, dong2017analysis}. Task periods were uniformly distributed over $[50ms,200ms]$. Task utilizations were distributed differently for each experiment using three uniform distributions. The ranges for the uniform distributions were $[0.005, 0.5]$ (light), $[0.5, 1]$ (medium), and $[1, 1.5]$ (heavy). Each DAG's execution time was calculated from the corresponding period and utilization values. The cp-utilization of each DAG was generated using two uniform distributions: $[0.1\times u_i, 0.3 \times u_i]$ (critical paths are relatively short), and $[0.3\times u_i, 0.5\times u_i]$ (critical paths are relatively long). We varied the total system utilization $U_{sum}$ within $\{0.1, 0.2, \dots, M\}$. For each combination of per-DAG task utilization, cp-utilization, and $U_{sum}$, 1,000 task sets were generated for systems with $8$ or $16$ processors. Each such task set was generated by creating tasks until total utilization exceeding the corresponding utilization cap, and by then increasing the last task's period so that the total utilization equals the utilization cap. For each generated set, HRT schedulability was checked under CP-GEDF, CAB, and SU.

Another interesting comparison examined in our experiments is to verify whether our derived test takes advantage of the intra-DAG prallelism. That is, whether does the intra-DAG parallelism help a DAG task system easier to be schedulable on a multiprocessor compared to ordinary sporadic task systems. For this evaluation, for each generated DAG set, we generate a corresponding sporadic task set where each sporadic task $\tau_i$ has the same utilization as the utilization of each DAG task $\tau_i$ (the total utilization of subtasks of the DAG).
 We check the schedulability of each such generated sporadic task set by the classic density test~\cite{goossens2003priority}, denoted as ``U-ordinary''. Note that if any generated DAG task in an experiment set has a utilization greater than one, then we choose not to compare CP-GEDF with U-ordinary for that experiment set.

\vspace{1mm}
\noindent\textbf{Experimental results:} The experimental schedulability results are shown in Fig.~\ref{exp:all} (the organization of each sub-figure is described in the figure's caption). Each curve represents the fraction of the generated task sets successfully scheduled by the corresponding approach, as a function of the total system utilization. As seen in the figure, under all tested scenarios, CP-GEDF improves upon SU and CAB by a notable margin. For example, in Fig.~\ref{fig:exp3}, when DAGs' critical paths are short and per-task utilization is medium, CP-GEDF-s can achieve 100\% schedulability when $U_{sum}$ is not exceeding $6$ while SU and CAB tests fail to do so when $U_{sum}$ merely exceeds $3.25$ and $1.25$ ,respectively.\footnote{Let CP-GEDF-s (CP-GEDF-l) denote the case with short (long) critical paths.} Note that in Fig.~\ref{fig:exp5} and Fig.~\ref{fig:exp6}, when tasks' critical paths are long and per-task utilization is heavy, no task set can pass the CAB test.

The experimental results also verify that our schedulability test explores the intuition where the DAG task model may better benefit from the multiprocessor parallelism compared to the sproadic task model. As seen in Fig.~\ref{fig:exp1}, Fig.~\ref{fig:exp3}, Fig.~\ref{fig:exp2} and Fig.~\ref{fig:exp4}, CP-GEDF improves upon U-ordinary by rather large margins. This is because with the same per-task utilization, the cp-utilization for each DAG task is smaller than the task utilization for each ordinary sporadic task. According to our test and the classical density test, it is easier for the DAG task set to pass the schedulability test (Theorem~\ref{Theorem:test}).

\section{Conclusion}

In this paper, we present a set of novel scheduling and analysis techniques for better supporting hard real-time sporadic DAG tasks on multiprocessors, through smartly defining and analyzing the execution order of subtasks in each DAG. As demonstrated by experimental results, our proposed test significantly improves upon existing utilization-based tests with respect to schedulability, and is often able to guarantee schedulability with little or no utilization loss.

\bibliographystyle{IEEEtran}
\bibliography{DAG}

\clearpage
\section*{Appendix}
\noindent\textbf{Proof of Lemma~\ref{lemma:lastjob}}.
\begin{proof}
Let $\mathcal{S}$ denote the CP-GEDF schedule and $f_{i,j}^{x_u}$ denote the completion time of job $\tau_{i,j}^{x_u}$. To prove this lemma, it suffices to prove that the job $\tau_{i,j}^{x_{k}}$, which is the last job of the critical path, is the last completed job among all jobs belonging to dag-job $\tau_{i,j}$. We prove this new proof obligation by contradiction. Assume that another job $\tau_{i,j}^{y_0}$ which completes later than $\tau_{i,j}^{x_{k}}$. 

Since under GEDF, dag-jobs released by different DAG tasks have distinct priorities and all jobs belonging to a dag-job inherit the same priority of the dag-job, the relative execution ordering of all jobs belonging to dag-job $\tau_{i,j}$ including $\tau_{i,j}^{x_{u}}$ and $\tau_{i,j}^{y_0}$  does not depend on other dag-jobs, but solely depends on the DAG structure of $\tau_i$ and the \cp policy.

We analyze the interval $[r_{i,j}, f_{i,j}^{y_0})$ by dividing it into $w \geq 1$ time intervals, denoted by $[f_{i,j}^{y_1}, f_{i,j}^{y_0}),  [f_{i,j}^{y_2}, f_{i,j}^{y_1}), \dots, [r_{i,j}, f_{i,j}^{y_w})$, ordered from right to left with respect to time. We identify these time intervals by moving from right to left with respect to time in the schedule $\mathcal{S}$ considering jobs belonging to the dag-job $\tau_{i,j}$.

\begin{figure*}[!tp]
\centerline
{\includegraphics[width=0.75\textwidth]{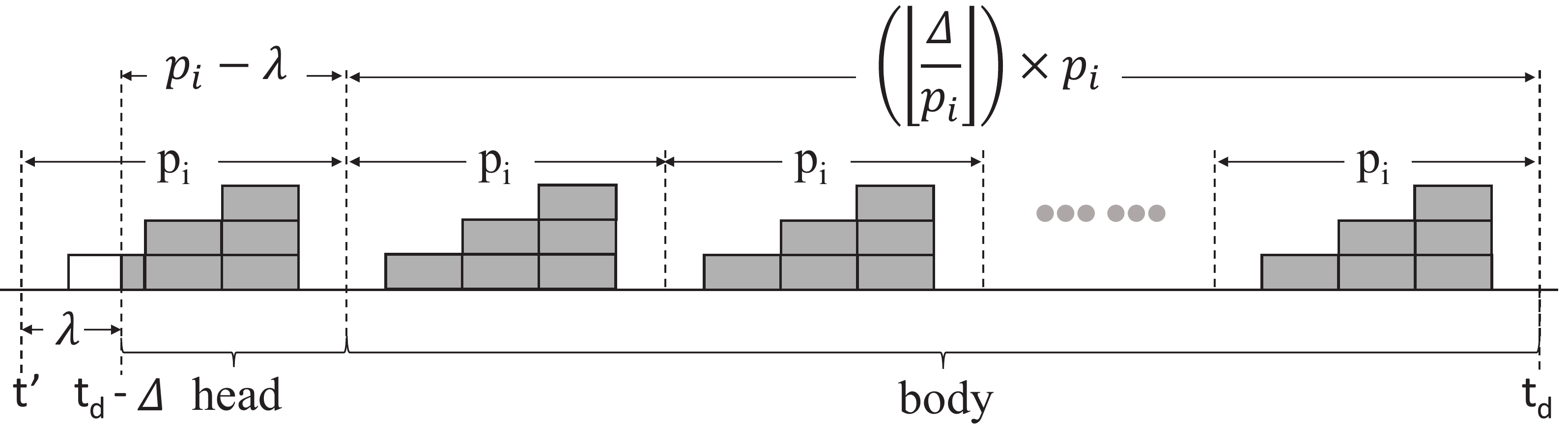}
}\caption{\footnotesize{Maximal $W_i$ for the problem window $[t_d-\Delta, t_d)$.}}\label{worstcaseWi}
\vspace{-10pt}
\end{figure*}

We identify this interval set by finding a path in $\tau_i$ starting from a source subtask and ending at the subtask $\tau_i^{y_0}$.
Moving from the time instant $f_{i,j}^{y_0}$ to the left in $\mathcal{S}$, let $f_{i,j}^{y_1}$ denote the latest completion time among $\tau_{i,j}^{y_0}$'s predecessor jobs. Note that  $\tau_{i,j}^{y_0}$ becomes ready at $f_{i,j}^{y_1}$. We thus identify the first interval$[f_{i,j}^{y_1}, f_{i,j}^{y_0})$ in this set. Moving from  $f_{i,j}^{y_1}$ to the left in $\mathcal{S}$, we apply this same process to identify the remaining intervals in this set, until we find the source subtask $\tau_{i}^{y_w}$ for this path, which is ready at the release time of the corresponding dag-job $\tau_{i,j}$ at $r_{i,j}$ and completes at $f_{i,j}^{y_w}$. Note that this path always exists because $\tau_{i,j}^{y_0}$ exists.

We now show a contradiction that there exists another path consisting of $\tau_{i}^{y_w},  \dots, \tau_{i}^{y_2}, \tau_{i}^{y_1}, \tau_{i}^{y_0}$ that is longer than the critical path of $\tau_i$. For each time interval $[f_{i,j}^{y_{q+1}}, f_{i,j}^{y_q})$ $(1\leq q\leq w-1)$, we know that $\tau_{i,j}^{y_q}$ is ready at the beginning of this interval $f_{i,j}^{y_{q+1}}$ according to the definition of the interval. Thus, each interval $[f_{i,j}^{y_{q+1}}, f_{i,j}^{y_q})$ (including $[r_{i,j}, f_{i,j}^{y_w})$) only consists of two kinds of subintervals: (\textit{i}) subintervals during which $\tau_{i,j}^{y_q}$ executes continuously, and (\textit{ii}) subintervals during which $\tau_{i,j}^{y_q}$ are not executing (i.e., being preempted by jobs belonging to other dag-jobs with higher priorities than $\tau_{i,j}$ or being delayed by jobs belonging to the same dag-job). During the first kind of subintervals, jobs from the subtasks on the critical path of $\tau_i$ may execute; but during the second kind of subintervals, jobs from the subtasks on the critical path of $\tau_i$ do not execute due to the \cp policy, for otherwise $\tau_{i,j}^{y_q}$ should have been executing during such subintervals. Thus, during each $[f_{i,j}^{y_{q+1}}, f_{i,j}^{y_q})$, the workload executed due to $\tau_{i,j}^{y_q}$ is at least the workload due to jobs from the subtasks on the critical path of $\tau_i$. Moreover, within the last interval $[f_{i,j}^{y_{1}}, f_{i,j}^{y_0})$, since $\tau_{i,j}^{y_0}$ completes later than $\tau_{i,j}^{x_k}$, we know that the workload executed due to $\tau_{i,j}^{y_0}$ must be strictly greater than the workload due to jobs from the subtasks on the critical path of $\tau_i$. Therefore, the total workload executed within $[r_{i,j}, f_{i,j}^{y_{0}})$ due to the subtask set $\{\tau_{i}^{y_w},  \dots, \tau_{i}^{y_1}, \tau_{i}^{y_0}\}$ must be strictly greater than the total workload due to the subset of subtasks on the critical path of $\tau_i$.

Clearly, we have identified another path in $\tau_i$ and this path is longer than the critical path of $\tau_i$. A contradiction is reached.
\end{proof}

\noindent\textbf{Proof of Lemma~\ref{lemma:boundWi}}. ( This lemma is proved in the same manner as lemma~10 in~\cite{baker2003}.)
\begin{proof}

The workload contributed by $\tau_i$ in $[t_d-\Delta, t_d)$, denoted by $W_i$, is the sum of the workload contributed by $\tau_i$ within three sub-windows. Our method is to identify a worst-case situation, where $W_i$ achieves the largest possible value for $\tau_i$ in $[t_d-\Delta, t_d)$. Note that under CP-GEDF, the contribution of any dag-job released by $\tau_i$ in the tail sub-window is zero w.r.t. $W_i$ as such dag-jobs have deadlines later than $t_d$.
We will first consider the case where $\Delta \geq p_i$, and then the other case where $\Delta < p_i$.

\vspace{1mm}
\noindent\textbf{Case 1:}  $\Delta \geq p_i$. 
As illustrated in Fig.~\ref{worstcaseWi}, it is evident that the maximum contribution due to $\tau_i$ in the body sub-window is achieved when the dag-jobs within the body sub-window are released periodically and the last such dag-job has the deadline at $t_d$. Thus, the number of completed dag-jobs of $\tau_i$ in the body sub-window is clearly upper-bounded by $\lfloor\frac{\Delta}{p_i}\rfloor$.

According to Lemma~\ref{lemma:upperboundon}, the upper bound of the carry-in workload of $\tau_i$ is $C_i - \Delta\times\lambda$ which is a non-increasing function of $\lambda$. Therefore, the carry-in workload is maximized when $\lambda$ is minimized. As seen in Fig.~\ref{worstcaseWi}, when the body sub-window ends exactly at $t_d$, the length of the head sub-window, $p_i - \lambda$, is $\Delta - \lfloor\frac{\Delta}{p_i}\rfloor\times p_i$. In this case, we have
\begin{equation}
p_i - \lambda = \Delta - \lfloor\frac{\Delta}{p_i}\rfloor\times p_i \Rightarrow \lambda = (\lfloor\frac{\Delta}{p_i}\rfloor + 1)\times p_i - \Delta.
\end{equation}


In the following, we explain why $\lambda$ cannot be smaller than $(\lfloor\frac{\Delta}{p_i}\rfloor + 1)\times p_i - \Delta$.
According to Lemma~\ref{lemma:upperboundon}, the carry-in workload $\varepsilon_i$ of $\tau_i$ within $[t_d-\Delta, t_d)$ is upper bounded  by $C_i - \sigma_h\times\lambda$, i.e.,
\begin{equation}\label{eq:carryinupperbound}
\varepsilon_i \leq C_i - \sigma_h\times\lambda,
\end{equation}
where $\sigma_h$ is the cp-utilization of the DAG task which misses its deadline.

If we further decrease $\lambda$, it results in at most a linear increase on $\varepsilon_i$ (i.e., $\sigma_h$) according to Eq.~\ref{eq:carryinupperbound}. Thus, If we decrease $\lambda$ to any value smaller than $(\lfloor\frac{\Delta}{p_i}\rfloor + 1)\times p_i - \Delta$, it results in at most a linear increase in the contribution to the head sub-window which is at most $C_i$. However, this causes the number of released dag-jobs of $\tau_i$ to be reduced by one within the body sub-window. Thus, any further decrease on the value of $\lambda = (\lfloor\frac{\Delta}{p_i}\rfloor + 1)\times p_i - \Delta$ will cause $W_i$ to decrease. Because $[t_d-\Delta, t_d)$ includes the head sub-window and the body sub-window, when we decrease $\lambda$, the length of the head sub-window increases and the length of body sub-window decreases. When $\lambda = (\lfloor\frac{\Delta}{p_i}\rfloor + 1)\times p_i - \Delta$, the body sub-window ends exactly at $t_d$. If the length of body sub-window decreases, the deadline of the last dag-job released by $\tau_i$ before $t_d$ will exceeds $t_d$. Since every dag-job with a deadline later than $t_d$ is removed from $\mathcal{S}$, the contribution of $\tau_i$ to the body sub-window is decreased by $C_i$.

Therefore, the value of $W_i$  gets maximized when $\lambda = (\lfloor\frac{\Delta}{p_i}\rfloor + 1)\times p_i - \Delta$, and we have
\begin{equation}
W_i \leq \lfloor\frac{\Delta}{p_i}\rfloor\times C_i + \max\{0, C_i - \sigma_h\times\lambda\}.
\end{equation}

\vspace{1mm}
\noindent\textbf{Case 2:}  $\Delta < p_i$. In this case, the contribution on $W_i$ due to the body sub-window is $0$, since it is impossible for a dag-job of $\tau_i$ to have both release time and deadline within the window. 
Thus, according to Lemma~\ref{lemma:upperboundon}, $W_i$ in this case is at most $\max\{0, C_i - \sigma_h\times\lambda\}$.

By combining these two cases, we have
\begin{equation}
    W_i=
    \begin{cases}
      \lfloor\frac{\Delta}{p_i}\rfloor\times C_i + \max\{0, C_i - \sigma_h\times\lambda\}, & \text{if}\ \Delta \geq p_i \\
      \max\{0, C_i - \sigma_h\times\lambda\}, & \text{otherwise}
    \end{cases}
\end{equation}
\noindent where $\lambda = (\lfloor\frac{\Delta}{p_i}\rfloor + 1)\times p_i - \Delta$.
\end{proof}

\noindent\textbf{Proof of Lemma~\ref{lemma:upperbound}}. ( This lemma is proved in the same manner as lemma~11 in~\cite{baker2003}.)
\begin{proof}
Let $\theta(\Delta)$ denote $\frac{W_i}{\Delta}$. In this lemma, we upper bound $\theta(\Delta)$ for $\tau_i$. According to lemma~\ref{lemma:boundWi}, we have
\begin{equation}
    \theta(\Delta)= \frac{W_i}{\Delta} =
    \begin{cases}
     \frac{ \lfloor\frac{\Delta}{p_i}\rfloor\times C_i + \max\{0, C_i - \sigma_h\times\lambda\}}{\Delta}, & \text{if}\ \Delta \geq p_i \\
     \frac{ \max\{0, C_i - \sigma_h\times\lambda\}}{\Delta}, & \text{otherwise}
    \end{cases}
\end{equation}
\noindent where $\lambda = (\lfloor\frac{\Delta}{p_i}\rfloor + 1)\times p_i - \Delta$.

Since $\max\{0, C_i - \sigma_h\times\lambda\}$ may have different values, we have two cases here:

\noindent\textbf{Case 1:} $\max\{0, C_i - \sigma_h\times\lambda\}  = 0$.

We have $C_i - \sigma_h\times\lambda\leq 0$, thus $\lambda \geq \frac{C_i}{\sigma_h}$. Since we also know
that $\lambda < p_i$, we have $\sigma_h <  \frac{C_i}{p_i}$. Since $\lfloor\frac{\Delta}{p_i}\rfloor \leq \frac{\Delta}{p_i}$, we have
\begin{equation}
\theta(\Delta) = \frac{W_i}{\Delta} = \frac{\lfloor\frac{\Delta}{p_i}\rfloor\times C_i}{\Delta}\leq \frac{\frac{\Delta}{p_i}\times C_i}{\Delta} = \frac{C_i}{p_i} = u_i.
\end{equation}

\noindent\textbf{Case 2:} $\max\{0, C_i- \sigma_h\times\lambda\}  \neq 0$.

We have $C_i - \sigma_h\times\lambda > 0$. Since $\lambda = (\lfloor\frac{\Delta}{p_i}\rfloor + 1)\times p_i - \Delta$,

\begin{equation}
\begin{split}
\theta(\Delta) &= \frac{\lfloor\frac{\Delta}{p_i}\rfloor\times C_i + C_i - \sigma_h\times\lambda}{\Delta}\\
&= \frac{\lfloor\frac{\Delta}{p_i}\rfloor\times(C_i - \sigma_h\times p_i) + C_i - \sigma_h\times(p_i - \Delta)}{\Delta}.
\end{split}
\end{equation}

\noindent\textbf{Case 2.1:} $C_i - \sigma_h\times p_i > 0$, i.e.~$\sigma_h < \frac{C_i}{p_i}$.

\begin{equation}
\begin{split}
\theta(\Delta)&= \frac{\lfloor\frac{\Delta}{p_i}\rfloor\times(C_i - \sigma_h\times p_i) + C_i - \sigma_h\times(p_i - \Delta)}{\Delta}\\
&=\frac{\frac{\Delta}{p_i}\times(C_i - \sigma_h\times p_i) + C_i - \sigma_h\times(p_i - \Delta)}{\Delta}\\
&= \frac{C_i}{p_i} + \frac{C_i - \sigma_h\times p_i}{\Delta}\\
&\leq \frac{C_i}{p_i} + \frac{C_i - \sigma_h\times p_i}{p_h} = u_i + \frac{C_i - \sigma_h\times p_i}{p_h}.
\end{split}
\end{equation}

\noindent\textbf{Case 2.2:} $C_i - \sigma_h\times p_i \leq 0$, i.e.~$\sigma_h \geq \frac{C_i}{p_i}$. It is evident that $\lfloor\frac{\Delta}{p_i}\rfloor > \frac{\Delta}{p_i} - 1$, then we have
\begin{equation}
\begin{split}
\theta(\Delta)&= \frac{\lfloor\frac{\Delta}{p_i}\rfloor\times(C_i - \sigma_h\times p_i) + C_i - \sigma_h\times(p_i - \Delta)}{\Delta}\\
&< \frac{(\frac{\Delta}{p_i}-1)\times(C_i - \sigma_h\times p_i) + C_i - \sigma_h(p_i - \Delta)}{\Delta}\\
&= \frac{C_i}{p_i} = u_i.
\end{split}
\end{equation}

Thus, this lemma holds.
\end{proof}

\end{document}